\documentclass[a4paper,twocolumn,accepted=2024-06-12,11pt]{quantumarticle}
\pdfoutput=1

\usepackage{adjustbox}
\usepackage{amssymb}
\usepackage{graphicx}
\usepackage[english]{babel}
\usepackage{bbold}
\usepackage{braket}
\usepackage{graphicx}
\usepackage{float}
\usepackage{makeidx}
\usepackage{enumitem}
\usepackage{amsthm}
\usepackage{tikz}
\usepackage{color, colortbl}
\usetikzlibrary{quantikz}
\usepackage{url}
\usepackage[utf8]{inputenc}
\usepackage[caption=false]{subfig}
\usepackage{scrextend}
\usepackage[colorlinks=true,citecolor=red,linkcolor=brown,urlcolor=magenta]{hyperref}
\usepackage{makecell}
\usepackage{tabularx}
\usepackage{multirow}
\captionsetup[table]{labelsep=space, justification=raggedright, singlelinecheck=off}
\usepackage[]{algpseudocode}
\usepackage[numbers,sort&compress]{natbib}

\usepackage[outercaption]{sidecap}

\definecolor{green}{HTML}{66FF66}
\definecolor{myGreen}{HTML}{009900}

\newcounter{protocol}

% not defined in mwe
% vertically centers numbers

\newcommand{\sbline}{\\[.5\normalbaselineskip]}% small blank line

\newtheorem{theorem}{Theorem}
\newtheorem{lemma}{Lemma}

\DeclareMathAlphabet{\pazocal}{OMS}{zplm}{m}{n}
\newcommand{\E}{\pazocal{E}}

\newcommand{\R}{\pazocal{R}}
\newcommand{\h}{\pazocal{H}}

\newcommand{\C}{\pazocal{C}}
\newcommand{\p}{\pazocal{P}}

\newcommand{\D}{\pazocal{D}}

\newcommand{\Q}{\pazocal{Q}}

\newcommand{\U}{\pazocal{U}}

\newcommand{\Z}{\pazocal{Z}}
\newcommand{\X}{\pazocal{X}}
\newcommand{\Y}{\pazocal{Y}}
\newcommand{\I}{\pazocal{I}}

\begin{document}

\title{Efficiently improving the performance of noisy quantum computers}

% \thanks{ Correspondence should be addressed to \href{mailto:samuele.ferracin@keysight.com}{samuele.ferracin@keysight.com}}

\author{Samuele Ferracin}
\thanks{Now at IBM Quantum.}
\affiliation{Keysight Technologies Canada, Kanata, ON K2K 2W5, Canada
}
\affiliation{Department of Applied Mathematics, University of Waterloo, Waterloo, Ontario N2L 3G1, Canada}

\author{Akel Hashim}
\affiliation{Quantum Nanoelectronics Laboratory, Dept. of Physics, University of California at Berkeley, Berkeley, CA 94720, USA
}
\affiliation{Applied Math and Computational Research Division, Lawrence Berkeley National Lab, Berkeley, CA 94720, USA}

\author{Jean-Loup Ville}
\thanks{Now at Alice \& Bob.}
\affiliation{Quantum Nanoelectronics Laboratory, Dept. of Physics, University of California at Berkeley, Berkeley, CA 94720, USA
}

\author{Ravi Naik}
\affiliation{Quantum Nanoelectronics Laboratory, Dept. of Physics, University of California at Berkeley, Berkeley, CA 94720, USA
}
\affiliation{Applied Math and Computational Research Division, Lawrence Berkeley National Lab, Berkeley, CA 94720, USA}

\author{Arnaud Carignan-Dugas}
\affiliation{Keysight Technologies Canada, Kanata, ON K2K 2W5, Canada
}

\author{Hammam Qassim}
\affiliation{Keysight Technologies Canada, Kanata, ON K2K 2W5, Canada
}

\author{Alexis Morvan}
\thanks{Now at Google Quantum AI.}
\affiliation{Quantum Nanoelectronics Laboratory, Dept. of Physics, University of California at Berkeley, Berkeley, CA 94720, USA
}
\affiliation{Applied Math and Computational Research Division, Lawrence Berkeley National Lab, Berkeley, CA 94720, USA}

\author{David I. Santiago}
\affiliation{Quantum Nanoelectronics Laboratory, Dept. of Physics, University of California at Berkeley, Berkeley, CA 94720, USA
}
\affiliation{Applied Math and Computational Research Division, Lawrence Berkeley National Lab, Berkeley, CA 94720, USA}

\author{Irfan Siddiqi}
\affiliation{Quantum Nanoelectronics Laboratory, Dept. of Physics, University of California at Berkeley, Berkeley, CA 94720, USA
}
\affiliation{Applied Math and Computational Research Division, Lawrence Berkeley National Lab, Berkeley, CA 94720, USA}
\affiliation{Materials Sciences Division, Lawrence Berkeley National Lab, Berkeley, CA 94720, USA}

\author{Joel J. Wallman}
\affiliation{Keysight Technologies Canada, Kanata, ON K2K 2W5, Canada
}
\affiliation{Department of Applied Mathematics, University of Waterloo, Waterloo, Ontario N2L 3G1, Canada}
\thanks{Now at ORCA Computing.}

\begin{abstract}
Using near-term quantum computers to achieve a quantum advantage requires efficient strategies to improve the performance of the noisy quantum  devices presently available. We develop and experimentally validate two efficient error mitigation protocols named ``Noiseless Output Extrapolation" and ``Pauli Error Cancellation" that can drastically enhance the performance of quantum circuits composed of noisy cycles of gates. By combining popular mitigation strategies such as probabilistic error cancellation and noise amplification with efficient noise reconstruction methods, our protocols can mitigate a wide range of noise processes that do not satisfy the assumptions underlying existing mitigation protocols, including non-local and gate-dependent processes. We test our protocols on a four-qubit superconducting processor at the Advanced Quantum Testbed. We observe  significant improvements in the performance of both structured and random circuits, with up to $86\%$ improvement in variation distance over the unmitigated outputs. Our experiments demonstrate the effectiveness of our protocols, as well as their practicality for current hardware platforms.
\end{abstract}

\noindent\footnotesize{}

\noindent\footnotesize{S. Ferracin and A. Hashim contributed equally.}

\maketitle

\section{Introduction}
The last few years have seen unprecedented advances in quantum technologies, particularly in the development of Noisy Intermediate-Scale Quantum (NISQ)  devices.
These devices can already outperform their classical counterparts in specific tasks~\cite{GoogleSupremacy19}, but suffer from significant noise that corrupts their outputs. As fault tolerance is not expected to become available in the immediate future
\cite{Preskill18}, understanding how to optimize the performance of NISQ  devices is of paramount importance.

In recent years, much effort has been devoted to finding alternatives to fault tolerance that are feasible in the near term. This paved the way for the development of a rich set of error mitigation~(EM) protocols~\cite{LB17,TBG17,EBL18,KandalaEtal19,SQCBY20,GTHLMZ20,LRMKSZ20,ECBY21,KWYMGTK21,Koczor21,HNdYB20,HMOLRBWBM21,BHdJNP21}. Unlike fault-tolerant protocols, EM protocols do not attempt to correct errors that occur in noisy circuits. On the contrary, by actively amplifying noise in a controlled way and post-processing the outputs, they eventually extrapolate correct outputs from the noisy ones. EM protocols require a high number of samples and are generally inefficient in the circuit size~\cite{TEMG21}. On the other hand, they typically have little overhead in qubits and gates, which makes them practical for today's  devices.

Despite the promising results obtained in a large number of experimental demonstrations~\cite{EBL18,KandalaEtal19,SongEtAl19,ZhangEtAL20,SQCBY20,CACC20,HNdYB20,BHdJNP21,KWYMGTK21}, performing EM on large circuits remains a challenging task. Some of the leading proposals~\cite{TBG17,EBL18} require reconstructing the noise afflicting multi-qubit operations using Gate-Set Tomography~(GST)~\cite{BKEtal13,DanG15,BK&al17}. Since GST is inefficient in the number of qubits, the noise reconstruction is typically performed by analysing individual one- and two-qubit gates while ignoring the rest of the system. This approach severely limits the effectiveness of EM for large circuits, where noise processes involving more than two qubits (such as crosstalk) typically play a major role~\cite{MCWG20,HFW19,Hashim20}. 
For this reason, so far the EM protocols that rely on GST have been demonstrated on circuits containing up to two qubits \cite{SongEtAl19,ZhangEtAL20}, where this limitation is irrelevant. Alternative protocols~\cite{TBG17,KandalaEtal19,LRMKSZ20,HNdYB20,GTHLMZ20,BHdJNP21,Cai21} that do not rely on noise reconstruction have been applied to larger circuits~\cite{KWYMGTK21}, but their effectiveness has been formally proven only for specific noise models, e.g. for depolarizing noise \cite{GTHLMZ20,LRMKSZ20,BHdJNP21}. Overall, further work is required before EM protocols can become of use for circuits of interesting sizes.

Recently, a number of protocols have been developed that can characterize multi-qubit noise processes more efficiently than GST \cite{FW19,HFW19,Flammia21}, under a set of realistic (and verifiable) assumptions of the noise. Among these is ``Cycle Error Reconstruction'' (CER), which can accurately reconstruct Pauli channels even when they act on more than two qubits~\cite{Hashim20}. Leveraging CER, in this paper we develop a novel approach to EM that targets ``cycles'' of gates (i.e., groups of gates applied in parallel to disjoint subsets of qubits~\cite{WE16}) rather than individual gates. This approach enables us to design two new EM protocols to efficiently mitigate noise that involves an arbitrary number of qubits|potentially up to an entire register. By targeting the noise afflicting full cycles of gates rather than individual gates, our protocols are fully robust to complex error processes (such as cross-talk, correlated errors, and non-depolarizing noise) that are known to affect present devices~\cite{MCWG20,HFW19,Hashim20}, and that may compromise the effectiveness of existing EM protocols.

\begin{figure*}[!t]
     \centering
      \subfloat[][]{\includegraphics[clip,width=0.48\columnwidth]{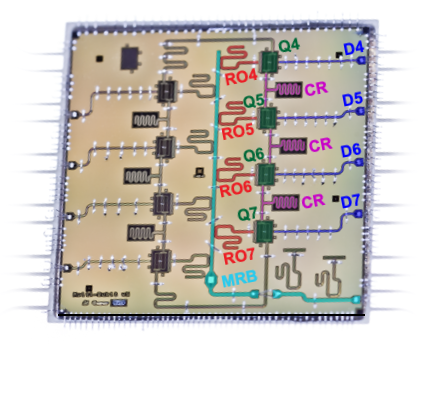}\label{fig:chip}}
      \qquad
      \subfloat[][]{\includegraphics[clip,width=1.48\columnwidth]{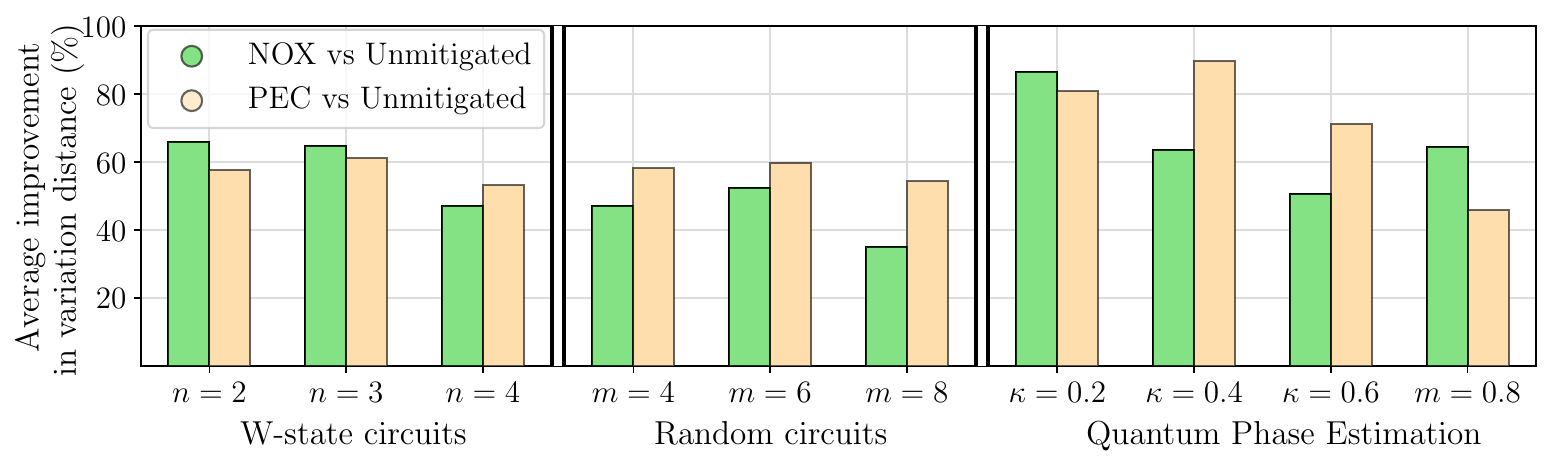}\label{fig:improvements}}
     \caption{\small (a) Micrograph of the superconducting quantum processor at the Advanced Quantum Testbed (reprinted with permission from Ref.~\cite{Hashim20}). In this work, we used the four transmon qubits (green) with independent drive lines (blue) out of a total of eight qubits arranged in a ring geometry---the other four qubits are inactive on the device. The qubits are coupled to nearest neighbors via coupling resonators (CR, purple), and are dispersively measured via independent readout resonators (red) coupled to a multiplexed readout bus (MRB, cyan). (b) Average improvements in variation distance obtained in successful implementations of NOX and PEC protocols. In our experiments we apply PEC and NOX to four-qubit random circuits of varying depth $m$ as well as to structured circuits, such as circuits to prepare W states with $n=2,3,4$ qubits (Eq.~\ref{eq:w_states}) and to estimate a parameter $\kappa$ through the quantum phase estimation algorithm. We compute the variation distance (VD, Eq.~\ref{eq:def_VD}) between the ideal and experimental probability distributions of the outputs, and we quantify the improvement in VD under EM as $1-\textrm{D}_\textrm{EM}/\textrm{D}_\textrm{unm}$, where $\textrm{D}_\textrm{EM}$ ($\textrm{D}_\textrm{unm}$) is the VD for the mitigated (unmitigated) outputs. We observe drastic improvements in VD for both NOX and PEC, ranging from $32\%$ to as high as $86\%$.}
     \label{fig:chip_and_summary}
\end{figure*}

\newcolumntype{g}{>{\columncolor{gray!10}}c}
\newcolumntype{y}{>{\columncolor{yellow!10}}c}
\begin{table}[t]
\footnotesize
    \setlength{\extrarowheight}{4pt}
    \begin{tabular}{r|| g | y}
   & \textbf{PEC} & \textbf{NOX} \\
    \hline
    \hline
   &&\\[-2ex]
   \textbf{R.time} & $\:\dfrac{(1-n\varepsilon)^{-2m}}{\sigma^2}\:$ & $\dfrac{m^3}{\sigma^2}$ \\
   &&\\[-2ex]
   \textbf{Bias} & $\delta_\textrm{PEC}+\delta_\textrm{rec},$ with & $\delta_\textrm{NOX}+\delta_\textrm{rec},$ with \\
   & $\delta_\textrm{PEC}=O( mn^2\varepsilon^2)$ & $\delta_\textrm{NOX}=O( m^2n^2\varepsilon^2)$\\
\end{tabular}
    \caption{Runtime and bias for our EM protocols, calculated as functions of the desired standard deviation $\sigma$ of the results, the circuit depth $m$, and the error rate of each cycle. (For simplicity, in this table we assume that every cycle has an error rate equal to $n\varepsilon$, where $n$ is the number of qubits in the input circuit and $\varepsilon$ is a constant; see theorems \ref{th:PEC} and \ref{th:ce} in section \ref{sec:our_protocols} for a generalisation.) The quantity $\delta_\textup{rec}$ depends on the accuracy of the noise reconstruction, with $\delta_\textup{rec}=0$ if the noise is known exactly (cfr.~lemmas \ref{lem:relax_PEC} and \ref{lem:relax_ce}). In comparison, an unmitigated implementation of the input circuit hes runtime ${1}/{\sigma^2}$ and bias $O( mn\varepsilon)$.
    }
    \label{tab:summary}
\end{table}

\subsection{Our protocols in summary}

Our protocols (which we name ``Pauli Error Cancellation’’, or PEC, and ``Noiseless Output Extrapolation’’, or NOX) take as input a quantum circuit and an operator $O$, and return an estimator of the expectation value of $O$ at the end of a noiseless implementation of the input circuit. To compute this estimator they undertake two different approaches. In particular, PEC is built around quasi-probabilistic error cancellation, one of the most popular techniques in EM~\cite{LB17,TBG17,EBL18,KandalaEtal19}, while NOX requires amplifying the noise afflicting individual cycles. Despite this difference, they require performing the same fundamental tasks: characterizing the noise in the input circuit with CER, implementing a set of noisy circuits, and finally post-processing their results.

The estimators returned by NOX and PEC have a residual bias that depends on the noise levels of the input circuit, as well as on the accuracy of the noise reconstruction (Table \ref{tab:summary}). Crucially, if backed by an accurate noise reconstruction, this bias is quadratic in the error rate of the input (unmitigated) circuit. This is a significant improvement over the unmitigated estimators, whose biases are linear in the error rates.

While being able to mitigate a broader class of noise processes than other existing EM protocols, NOX and PEC require a similar runtime as the available protocols. In particular, if the input circuit is afflicted by moderate noise (i.e. in the notation of Table~\ref{tab:summary}, if $\varepsilon<1/mn$), they remain \textit{efficient}, meaning that their runtimes scale polynomially with the desired statistical accuracy of the results.

\subsection{Our experiments in summary}
We demonstrate our EM protocols on four fixed-frequency superconducting transmon qubits (labeled Q4, Q5, Q6, and Q7) on an eight qubit quantum processor ($\texttt{AQT@LBNL Trailblazer8-v5.c2}$; see Fig.~\ref{fig:chip}) at the Advanced Quantum Testbed at Lawrence Berkeley National Lab \cite{AQT}. Single-qubit gates are implemented using resonant Rabi-driven $X_{\pi/2}$ gates and virtual phase shifts between pulses \cite{mckay2017efficient}. The native two-qubit gate on the device is a controlled-$Z$ ($cZ$) implemented via off-resonant drives between neighboring qubits \cite{MitchellAndOthers21}. We apply PEC and NOX to circuits of different size and nature. We implement each circuit multiple times, with and without EM, and calculate the average variation distance~(Eq.~\ref{eq:def_VD}) between ideal and experimental probability distributions of the outputs. We observe significant improvements in variation distance under EM for every circuit (Fig.~\ref{fig:improvements}).

The $cZ$ gates are the noisiest components in our circuits. However, the CER data show that when a $cZ$ gate is performed in parallel with idling spectator qubits, the idling qubits experience higher levels of noise than the qubits being entangled (see Appendix). In this scenario, performing mitigation at the level of individual gates (e.g.~with the gate-centred EM protocols that rely on GST~\cite{TBG17,EBL18}) would not lead to significant improvements in the outputs, and could in fact amplify the noise acting on the idling qubits. On the contrary, by targeting cycles rather than gates, NOX and PEC are able to provide the visible improvements reported in Fig~\ref{fig:improvements}.\\

This paper is organized as follows. In section \ref{sec:background} we define our notation, list the assumptions made throughout the paper and provide a brief overview of CER. In section~\ref{sec:our_protocols} we describe our protocols and state their main properties. In section \ref{sec:experiment} we describe our experiments.

\section{Background}
\label{sec:background}

\subsection{Notation and assumptions}
\label{sec:notation_and_assumptions}
We denote unitary gates with capital Latin letters and Completely Positive Trace-Preserving (CPTP) maps with calligraphic letters. We write $\U=\{U_l\}_{l=1}^L$ to indicate that $\U$ has Kraus operators $U_l$. We use $\circ$ to indicate the composition of maps, e.g.~$\circ_{j=1}^m\U_j=\U_m\cdots\U_1$.

To prove our results we make the following two assumptions:
\begin{itemize}[leftmargin=6mm]
    \item[A1.] We assume that the noise is Markovian and time-stationary, i.e., that a noisy implementation of an operation $\U$ can be written as $\D_\U\U$, where $\D_\U$ is a (potentially operation-dependent) CPTP map that is fixed in time.
    \item[A2.] We assume that the cycles of one-qubit gates suffer gate-independent noise, i.e., $\D_\U=\D$ for all the cycles of one-qubit gates~$\U$.
\end{itemize}
These are standard assumptions in the literature on noise characterisation and mitigation \cite{BKEtal13,DanG15,BK&al17,WHEB04,MGSPC13,WE16,Erhard&al19,HFW19,HFW19,FKD18,FMMD21,BKEtal13,DanG15,BK&al17,WHEB04,MGSPC13,MGE11,CWE15} and can be relaxed at the cost of more complex notation \cite{WE16}.

In addition to A1 and A2, we assume that every noise process $\D_\U$ in our circuits is a Pauli channel, i.e., that it maps an $n$-qubit state $\rho$ into \begin{equation}
\label{eq:pauli_error_rates}
    \D_\U(\rho)=\sum_{k=0}^{4^n-1}\epsilon_k^{(\U)}\p_k(\rho) \:.
\end{equation}
Here, the ``Pauli errors'' $\p_k\in\{\I,\X,\Y,\Z\}^{\otimes n}$ are $n$-qubit Pauli operators and the ``Pauli error rates'' $\epsilon_k^{(\U)}$ are probabilities (we set $\p_0=\I^{\otimes n}$ for convenience). Although not every noise process is a Pauli channel, under the assumptions A1 and A2 every process can be efficiently transformed into Pauli channels via Randomized Compiling~\cite{WE16,Hashim20}, available on \texttt{True-Q}~\cite{trueq}.

\subsection{Cycle Error Reconstruction}
\label{sec:CER}
CER (available on the software \texttt{True-Q} \cite{trueq}) is a protocol to efficiently characterise noisy cycles with high accuracy. In more detail, let $\D _\h\h$ be a noisy implementation of an $n$-qubit Clifford cycle $\h$ with Pauli noise $\D_\h$. In its simplest form, CER takes as input the cycle $\h$ and a positive integer $K\leq n$. After characterizing the cycle's noise via Cycle Benchmarking \cite{Erhard&al19} and post-processing the results of these benchmarking circuits \cite{ Hashim20},  it estimates the Pauli error rates associated to \textit{all} the errors of weight $K$|that is, to all the errors that affect up to $K$ qubits simultaneously.

The accuracy of these estimates depends on the nature of the state-preparation and measurement~(SPAM) errors afflicting the  device in use. If no assumption is made on the SPAM errors, the estimates are
averages over small subsets of error rates that typically contain up to two elements. On the contrary, if state-preparation errors are negligible compared to measurement errors or  vice-versa (which is the case for many of today's platforms~\cite{FMMD21,MZO20} and is routinely assumed in related works~\cite{BSKMG20,Flammia21,MBZO21}), CER can estimate all the error rates individually.

Note that the total number of weight-$K$ errors grows as~$n^{K}$, hence CER is not efficient in $K$. However, at fixed $K$, CER scales polynomially in $n$. This means that CER can efficiently perform an accurate characterisation of the noisy cycle of interest, provided that low-weight errors encompass the majority of the probability distribution. This is often the case on state-of-the-art  devices, where high-weight errors ($K\geq3$) occur with negligible probability|as an example, see the CER data in Fig.~\ref{fig:KNR} or the CER data reported in Ref.~\cite{Hashim20}.

\section{Our EM protocols}
\label{sec:our_protocols}
In this section we describe our EM protocols and their overheads and biases. Without loss of generality, we consider input circuits that alternate cycles of one-qubit gates and cycles of Clifford two-qubit gates, implementing operations of the form
\begin{equation}
\label{eq:input}
    \C=\E_{m+1}\h_m\E_m\cdots\E_2\h_1\E_1=\E_{m+1}\big(\circ_{j=1}^m\h_j\E_j\big)\:.
\end{equation}
Here, $\E_j$ (respectively $\h_j$) is the operation implemented by the  $j$th cycle of one-qubit gates (respectively by the  $j$th cycle of two-qubit gates). Under the assumptions A1 and A2 (section \ref{sec:notation_and_assumptions}), a noisy implementation of the input circuit performs the map 
\begin{equation}
\label{eq:noisy_input}
\widetilde{\C}=\E_{m+1}\big(\circ_{j=1}^m\D_{\h_j}\h_j\E_j\big)\:,
\end{equation}
where we have recompiled the gate-independent noise afflicting the cycles of one-qubit gates into that afflicting the cycles of two-qubit gates and obtained the Pauli channels $\D_{\h_j}$ via Randomized Compiling.
Motivated by the above equation, for convenience we will refer to the cycles of one-qubit gates as ``noiseless under compilation'' (or simply as ``noiseless'') and to all the other cycles as ``noisy''.

\subsection{Pauli Error Cancellation}
\label{sec:PEC}
We begin by explaining the main ideas behind quasi-probabilistic error cancellation, one of the primary ingredients employed by our PEC protocol. Quasi-probabilistic error cancellation is a strategy to compute an unbiased estimator by sampling from a distribution of biased estimators~\cite{LB17,TBG17}. Formally, let $\U$ be a desired, noiseless operation, and let  $\{\widetilde{\U}_l\}_{l=1}^{L}$ be a set of noisy operations that can be implemented experimentally. The task of quasi-probabilistic error cancellation is to calculate a set of probabilities $q_l$, a set of signs $s_l\in\{-1,+1\}$ and a number $C_\textup{tot}>0$ (called the ``cost'') such that for any state $\rho_\textup{in}$ and operator~$O$,
\begin{align}
\label{eq:QPEC}
C_\textup{tot}\sum_{l=1}^Ls_lq_l \textup{Tr}\big[O\widetilde{\U}_{l}(\rho_\textup{in})\big]=\:\textup{Tr}\big[O\U(\rho_\textup{in})\big]+\delta\:,
\end{align}
where $\delta\approx0$ represents a residual bias and captures the effectiveness of the EM protocol. All the EM protocols based on quasi-probabilistic error cancellation guarantee a negligible bias $\delta\approx0$, provided that the noisy maps $\widetilde{\U}_l$ can be accurately characterised.

We can now present our PEC protocol, which is formally described in section I of the Supplementary Material. PEC takes as input the circuit $\C$ (Eq.~\ref{eq:input}), the Pauli error rates $\{\epsilon_k^{(\h_j)}\}$ of all the noisy cycles $\h_j$ in $\C$ (which are computed in advance with CER), an $n$-qubit state $\rho_\textup{in}$, an operator $O$ such that the spectral norm $||O||_\infty\sim 1$, and a number $\sigma\in(0,1)$ representing the desired standard deviation of the results. It uses quasi-probabilistic error cancellation to suppress the noise afflicting the  noisy cycles $\h_j$, and eventually it returns an estimator $\widehat{E}_\textup{PEC}(O)$ of Tr$\big[O\C\big(\rho_\textup{in}\big)\big]$. To calculate $\widehat{E}_\textup{PEC}(O)$, PEC requires running $N=(C_\textup{tot}/\sigma)^2$ circuits in total, with cost given by
\begin{equation}
\label{eq:cost}
C_{\textup{tot}}=\prod_{j=1}^{m}\frac{1}{\big(\epsilon_0^{(\h_j)}\big)^2
-\sum_{k=1}^{4^n}\big(\epsilon_k^{(\h_j)}\big)^2}\:.
\end{equation}
Each of these circuits is obtained by appending randomly chosen Pauli gates to the noiseless cycles. Specifically, every circuit in PEC implements an operation of the type
\begin{align}
\label{eq:circuit_PEC}
    \C^{\textup{(PEC)}}(\p_1,\ldots,\p_m)=\E_{m+1}\big(\circ_{j=1}^m\p_j\h_j\E_j\big)\:,
\end{align}
where $\p_{j}\in\{\I,\X,\Y,\Z\}^{\otimes n}$ is chosen at random with probability $\epsilon_{k_j}^{(\h_j)}$.Together with the $N$ circuits, PEC also initialises a list of signs $s_1,\ldots,s_N$, where $s_{k}=1$ if circuit $k$ contains an even number of random Pauli cycles $\p_j$ that are different from the identity and $s_k=-1$ otherwise. 

After initializing circuits and signs, PEC applies  Randomized Compiling to every circuit, runs the circuits and stores the results $r_1,\ldots,r_N$. Finally, it computes the estimator $\widehat{E}_\textup{PEC}(O)$ as
\begin{equation}
\widehat{E}_\textup{PEC}(O)=C_{\textup{tot}}\sum_{k=1}^N\frac{s_kr_k}{N}\:.
\end{equation}
The following theorem states the standard deviation and bias of $\widehat{E}_\textup{PEC}(O)$, under the simplifying assumption that the Pauli error rates of every noisy cycle are known exactly:
\begin{theorem}
\label{th:PEC}
 (Proof in section I of Supplementary Material).
Let the Pauli error rates of every noisy cycle $\h_j$ be known exactly. Under the assumptions A1 and A2 (section \ref{sec:notation_and_assumptions}), the number $\widehat{E}_\textup{PEC}(O)$ returned by our PEC protocol is an estimator of ${E}(O)$ with standard deviation $O(\sigma)$ and bias
\begin{align}
\delta_\textup{PEC}=O\bigg(C_\textup{tot}\sum_{j=1}^{m}(1-\epsilon_0^{(\h_j)})^2\bigg)\:.
\end{align}
\end{theorem}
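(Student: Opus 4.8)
The plan is to compute the expected value $\langle\widehat E_\textup{PEC}(O)\rangle$ of the estimator over the random Pauli insertions (and the single measurement shot of each circuit) in closed form, recognise its leading term as $E(O)=\textup{Tr}[O\C(\rho_\textup{in})]$, bound the remainder, and then read off the standard deviation from the boundedness of the summands. By assumptions A1--A2, and because a Pauli operator commutes with a Pauli channel when both are viewed as superoperators, a noisy (Randomized-Compiled) run of the circuit $\C^{\textup{(PEC)}}(\p_1,\ldots,\p_m)$ of Eq.~\ref{eq:circuit_PEC} implements $\E_{m+1}\big(\circ_{j=1}^{m}\p_j\D_{\h_j}\h_j\E_j\big)$, the inserted Paulis being absorbed into the adjacent noiseless cycles. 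Since the $\p_j$ are drawn independently with probability $\epsilon_{k_j}^{(\h_j)}$ and the sign factorises as $s_k=\prod_j\textrm{sgn}(\p_j)$ with $\textrm{sgn}(\p_0)=+1$ and $\textrm{sgn}(\p_k)=-1$ for $k\ge1$, linearity of expectation replaces each insertion $\p_j$ by the effective superoperator $\M_j:=\epsilon_0^{(\h_j)}\p_0-\sum_{k\ge1}\epsilon_k^{(\h_j)}\p_k$, so that $\langle\widehat E_\textup{PEC}(O)\rangle=C_\textup{tot}\,\textup{Tr}\big[O\,\E_{m+1}\big(\circ_{j=1}^{m}\M_j\D_{\h_j}\h_j\E_j\big)(\rho_\textup{in})\big]$.

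The crux is an algebraic identity for $\M_j\D_{\h_j}$. Multiplying the two Pauli channels and observing that the terms \emph{linear} in $\{\epsilon_k^{(\h_j)}\}_{k\ge1}$ cancel (the contribution $\epsilon_0^{(\h_j)}\sum_{k\ge1}\epsilon_k^{(\h_j)}\p_k$ appears with both signs), one obtains $\M_j\D_{\h_j}=\big((\epsilon_0^{(\h_j)})^2-\sum_{k\ge1}(\epsilon_k^{(\h_j)})^2\big)\p_0-\R_j$, where $\R_j$ is a completely positive map supported on \emph{non-identity} Paulis with total weight $(1-\epsilon_0^{(\h_j)})^2-\sum_{k\ge1}(\epsilon_k^{(\h_j)})^2\le(1-\epsilon_0^{(\h_j)})^2$. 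Hence the scalar prefactor equals $1/C_\textup{tot}^{(j)}$, where $C_\textup{tot}=\prod_jC_\textup{tot}^{(j)}$ by Eq.~\ref{eq:cost}, so $C_\textup{tot}^{(j)}\M_j\D_{\h_j}=\p_0+\G_j$ with $\G_j:=-C_\textup{tot}^{(j)}\R_j$, and
\[
\langle\widehat E_\textup{PEC}(O)\rangle=\textup{Tr}\Big[O\,\E_{m+1}\big(\circ_{j=1}^{m}(\p_0+\G_j)\,\h_j\E_j\big)(\rho_\textup{in})\Big].
\]
Expanding the product over the subset $S\subseteq\{1,\ldots,m\}$ of layers at which one keeps the $\G_j$ factor, the empty set reproduces $E(O)$ exactly, and $\delta_\textup{PEC}$ is the sum of the remaining $2^m-1$ terms.

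To bound $\delta_\textup{PEC}$ I would propagate $\rho_\textup{in}$ (trace norm $1$) through the circuit layer by layer: the CPTP maps $\E_j,\h_j,\E_{m+1}$ are contractions in trace norm, while $\|\R_j(X)\|_1\le(1-\epsilon_0^{(\h_j)})^2\|X\|_1$ because $\R_j$ is a nonnegative combination of Pauli conjugations with coefficients summing to at most $(1-\epsilon_0^{(\h_j)})^2$. Thus the term indexed by $S$ has modulus at most $\|O\|_\infty\prod_{j\in S}C_\textup{tot}^{(j)}(1-\epsilon_0^{(\h_j)})^2\le\|O\|_\infty C_\textup{tot}\prod_{j\in S}(1-\epsilon_0^{(\h_j)})^2$, the last step using $C_\textup{tot}^{(j)}\ge1$. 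Summing over $S\ne\emptyset$ gives $\delta_\textup{PEC}\le\|O\|_\infty C_\textup{tot}\big(\prod_j(1+(1-\epsilon_0^{(\h_j)})^2)-1\big)=O\big(C_\textup{tot}\sum_j(1-\epsilon_0^{(\h_j)})^2\big)$ in the moderate-noise regime with $\|O\|_\infty\sim1$, which is the claimed bias. Finally, $\widehat E_\textup{PEC}(O)=\frac{C_\textup{tot}}{N}\sum_{k=1}^{N}s_kr_k$ is an average of i.i.d.\ terms with $|s_kr_k|\le\|O\|_\infty$, so $\textup{Var}[\widehat E_\textup{PEC}(O)]\le C_\textup{tot}^2\|O\|_\infty^2/N=\|O\|_\infty^2\sigma^2$ for $N=(C_\textup{tot}/\sigma)^2$, i.e.\ standard deviation $O(\sigma)$.

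The main obstacle is establishing the identity for $\M_j\D_{\h_j}$: everything else is bookkeeping, but it is precisely the cancellation of the first-order cross terms---together with the observation that $C_\textup{tot}^{(j)}$ is the reciprocal of the identity component of $\M_j\D_{\h_j}$---that makes $\delta_\textup{PEC}$ \emph{quadratic} in the per-cycle error rates rather than linear, and hence makes PEC worthwhile. A secondary subtlety is that the $\G_j$ are sandwiched between the non-commuting layers $\h_j\E_j$, so the estimate must be carried out layer by layer via trace-norm contractivity rather than by collapsing the whole product; one also has to check $C_\textup{tot}^{(j)}\ge1$, which holds because $(\epsilon_0^{(\h_j)})^2-\sum_{k\ge1}(\epsilon_k^{(\h_j)})^2\le(\epsilon_0^{(\h_j)})^2\le1$.
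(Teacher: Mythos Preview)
Your proof is correct and follows essentially the same route as the paper's: both compute the expectation of $\widehat E_\textup{PEC}(O)$ by replacing each random signed Pauli insertion by the effective map $\M_j=\sum_l s_l\epsilon_l^{(\h_j)}\p_l$ and then exploit the identity $\M_j\D_{\h_j}=\big((\epsilon_0^{(\h_j)})^2-\sum_{k\ge1}(\epsilon_k^{(\h_j)})^2\big)\p_0-\R_j$, which is precisely the cancellation the paper displays for $m=1$ by splitting $\sum_{l,t}$ into $l=t$ (giving $C_\textup{tot}^{-1}\p_0$) and $l\neq t$ (where the $l=0$ and $t=0$ contributions cancel, leaving only $l\neq0,\,t\neq0,l$).

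The one mild difference is in the passage to general $m$: the paper treats $m=1$ explicitly and then argues ``following the same arguments'' while keeping only the leading correction (one layer carrying $\R_j$, the others ideal) under an $\approx$ sign, whereas you expand $\prod_j(\p_0+\G_j)$ over all subsets $S$ and bound every term via trace-norm contractivity, obtaining the closed-form bound $\delta_\textup{PEC}\le\|O\|_\infty\,C_\textup{tot}\big(\prod_j(1+(1-\epsilon_0^{(\h_j)})^2)-1\big)$ before taking big-$O$. This is a cleaner, non-perturbative packaging of the same estimate, but not a different idea.
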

Assuming perfect knowledge of the Pauli error rates is unrealistic for two reasons (more details in section \ref{sec:background}): Firstly, estimating \textit{all} the $4^n$ error rates of an $n$-qubit cycle is impractical even for few-qubit cycles, so we can only learn a few of them (e.g. the largest ones). Secondly, the estimates returned by CER are subject to statistical fluctuations. To relax this assumption, we show that our PEC protocol is robust to inaccuracies in the estimates of the Pauli error rates, provided that they are suitably small. Formally:
\begin{lemma}
\label{lem:relax_PEC}(Proof in section III of Supplementary Material). 
Let $\epsilon_l^{(\h_j)}$ be the Pauli error rates of the noisy cycle $\h_j$ and let $\widehat{\epsilon}_l^{(\h_j)}$ be the estimates computed with CER. Under the assumptions A1 and A2, the estimator $\widehat{E}_\textup{PEC}(O)$ returned by our PEC protocol has bias $\delta_\textup{PEC}'=\delta_\textup{PEC}+\delta_\textup{rec}$, where $\delta_\textup{PEC}$ is the bias in theorem~\ref{th:PEC} and
\begin{equation}
    \delta_\textup{rec}=O\bigg(\sum_{j=1}^m\sum_{l=0}^{4^n-1}\big|{\epsilon}_l^{(\h_j)}-\widehat{\epsilon}_l^{(\h_j)}\big|\bigg)\:.
\end{equation}
\end{lemma}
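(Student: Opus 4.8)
The plan is to isolate the extra bias caused by feeding PEC the CER estimates $\{\widehat{\epsilon}_l^{(\h_j)}\}$ instead of the exact rates $\{\epsilon_l^{(\h_j)}\}$ by writing the expectation of $\widehat{E}_\textup{PEC}(O)$ in closed form in both cases and subtracting, so that Theorem~\ref{th:PEC} disposes of the exact-rates part and only a reconstruction-error term remains to be estimated.

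First I would compute the expectation of $\widehat{E}_\textup{PEC}(O)$. Averaging $C_\textup{tot}s_k r_k$ over the independent random Pauli insertions $\p_1,\ldots,\p_m$ of Eq.~\ref{eq:circuit_PEC} and over the signs, and recalling that $\p_j$ is drawn with probability $\widehat{\epsilon}_{k_j}^{(\h_j)}$ and weighted by $-1$ exactly when $\p_j\neq\I$, the $j$th inserted cycle acts on average as the Hermiticity-preserving linear map $\widehat{\M}_j(\rho)=\widehat{\epsilon}_0^{(\h_j)}\rho-\sum_{k\geq1}\widehat{\epsilon}_k^{(\h_j)}\p_k\rho\p_k$; since the physical noise channels $\D_{\h_j}$ are untouched by the choice of estimates, under A1--A2 the average of the estimator equals $\textup{Tr}[O\,\widehat{\C}^\textup{eff}(\rho_\textup{in})]$ with
\[
\widehat{\C}^\textup{eff}:=\widehat{C}_\textup{tot}\,\E_{m+1}\big(\circ_{j=1}^m\widehat{\M}_j\D_{\h_j}\h_j\E_j\big),
\]
where $\widehat{C}_\textup{tot}$ is the cost of Eq.~\ref{eq:cost} evaluated at the estimates. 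Writing $\C^\textup{eff}$ for the same expression with $\widehat{\epsilon}$, $\widehat{C}_\textup{tot}$, $\widehat{\M}_j$ replaced by $\epsilon$, $C_\textup{tot}$, $\M_j$, Theorem~\ref{th:PEC} is precisely the statement $|\textup{Tr}[O(\C^\textup{eff}-\C)(\rho_\textup{in})]|=\delta_\textup{PEC}$, so the triangle inequality gives
\[
\delta_\textup{PEC}'\leq\delta_\textup{PEC}+\big|\textup{Tr}[O(\widehat{\C}^\textup{eff}-\C^\textup{eff})(\rho_\textup{in})]\big|=:\delta_\textup{PEC}+\delta_\textup{rec},
\]
and it remains to bound $\delta_\textup{rec}$ by $O\big(\sum_{j}\sum_{l}|\epsilon_l^{(\h_j)}-\widehat{\epsilon}_l^{(\h_j)}|\big)$.

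I would bound $\delta_\textup{rec}$ with a hybrid argument. Split $\widehat{\C}^\textup{eff}-\C^\textup{eff}$ into the term obtained by replacing the scalar $C_\textup{tot}\to\widehat{C}_\textup{tot}$ and the $m$ terms obtained by telescoping $\M_i\to\widehat{\M}_i$ one cycle at a time, leaving the other $\M_j$ and all $\D_{\h_j},\h_j,\E_j$ fixed. For the latter, each single replacement perturbs $\textup{Tr}[O(\cdots)(\rho_\textup{in})]$ by at most $\|O\|_\infty$ times $\|\widehat{\M}_i-\M_i\|_\diamond$ times the product of the diamond norms of the surrounding maps; using that $\D_{\h_j}\h_j\E_j$ are CPTP, $\|\M_j\|_\diamond\leq\sum_l\epsilon_l^{(\h_j)}=1$, $\|\widehat{\M}_j\|_\diamond\leq\sum_l\widehat{\epsilon}_l^{(\h_j)}$, and each conjugation $\p_k(\cdot)\p_k$ has unit diamond norm, one gets
\[
\|\widehat{\M}_i-\M_i\|_\diamond\leq\sum_{l=0}^{4^n-1}\big|\epsilon_l^{(\h_i)}-\widehat{\epsilon}_l^{(\h_i)}\big|,
\]
and summing over $i$ gives a contribution of the advertised order (up to the overall cost factor, discussed next). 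For the cost term, I would use $|c_j-\widehat{c}_j|\leq2\sum_l|\epsilon_l^{(\h_j)}-\widehat{\epsilon}_l^{(\h_j)}|$ for $c_j=(\epsilon_0^{(\h_j)})^2-\sum_{k\geq1}(\epsilon_k^{(\h_j)})^2$ (and $0\le\epsilon_l\le1$), together with the moderate-noise lower bounds on $c_j$ and $\widehat{c}_j$, to show $|C_\textup{tot}-\widehat{C}_\textup{tot}|=O\big(\sum_{j,l}|\epsilon_l^{(\h_j)}-\widehat{\epsilon}_l^{(\h_j)}|\big)$; multiplying by the bounded trace factor produces one further contribution of the same order, and collecting the pieces yields $\delta_\textup{rec}$.

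The main obstacle is the bookkeeping around the cost prefactor. Because $C_\textup{tot}$ and $\widehat{C}_\textup{tot}$ can grow like $(1-n\varepsilon)^{-2m}$, the hybrid bound naively carries a factor $C_\textup{tot}$ in front of $\sum_{j,l}|\epsilon_l^{(\h_j)}-\widehat{\epsilon}_l^{(\h_j)}|$; one has to argue, exactly as for Theorem~\ref{th:PEC} and Table~\ref{tab:summary}, that in the regime where PEC is efficient ($\varepsilon<1/(mn)$, so $C_\textup{tot}=O(1)$) this factor is a constant absorbed into the $O(\cdot)$, and to check that the telescoping loses no more than one such factor per term and that, once the prefactors are stripped off, the ``surrounding maps'' are genuinely norm-non-increasing. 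The perturbation estimates $\|\widehat{\M}_i-\M_i\|_\diamond$ and $|c_j-\widehat{c}_j|$, together with the standard caveat that the $\widehat{\epsilon}_l^{(\h_j)}$ must first be projected onto a genuine probability distribution before they can serve as sampling weights, are otherwise routine.
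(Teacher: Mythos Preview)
Your proposal is correct and follows essentially the same route as the paper: write the averaged PEC output as $\textup{Tr}[O\,\U'(\rho_\textup{in})]$ for an effective map built from the signed Pauli mixtures $\R'_{\h_j}=\sum_l s_l\widehat{\epsilon}_l^{(\h_j)}\p_l$, compare it to the analogous map $\U$ with exact rates via H\"older's inequality and a telescoping diamond-norm bound (the paper cites Theorem~2 of Ref.~\cite{WE16} for that step, whereas you do it by hand), and use $\|\R_{\h_j}-\R'_{\h_j}\|_\diamond\leq\sum_l|\epsilon_l^{(\h_j)}-\widehat{\epsilon}_l^{(\h_j)}|$. You are actually more careful than the paper on two points it glosses over: the paper silently keeps the same prefactor $C_\textup{tot}$ in both $\U$ and $\U'$ rather than tracking $\widehat{C}_\textup{tot}$ as you do, and its final bound carries an explicit factor $C_\textup{tot}$ in front of the sum which is then absorbed into the $O(\cdot)$ of the lemma statement exactly as you anticipated.
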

To better quantify the bias of PEC, let us assume for simplicity that the error probability is the same for every noisy cycle, i.e., $1-\epsilon_0^{(\h_j)}=\varepsilon$ for all $j\in\{1,\ldots,m\}$. In this case, the bias $\delta_\textup{PEC}$ in theorem~\ref{th:PEC} grows quadratically in $\varepsilon$ as $\delta_\textup{PEC}=O(m\varepsilon^2)$. 
Hence, if the Pauli error rates are known perfectly, PEC can successfully improve the performance of circuits with depth $m\lesssim\varepsilon^{-2}$. More generally, if we assume a fixed relative precision of the Pauli error rates, i.e. $\big|{\epsilon}_l^{(\h_j)}-\widehat{\epsilon}_l^{(\h_j)}\big|/\epsilon_l^{(\h_j)} = \beta $ for $\beta \ll 1$, 
we get $\delta_\textup{rec} = O(m \beta \epsilon)$. CER inherently provides Pauli error rates with multiplicative precision, and the relative uncertainty $\beta$ can be brought closer to zero by improving the quality of the characterization data set (e.g. by increasing the number of CER circuits and the number of shots). 
Overall, performing an accurate characterization is vital since a high relative uncertainty 
on the Pauli error rates might negatively impact the residual bias $\delta_\textup{PEC}'$.

To conclude the section we analyse the complexity of PEC. To achieve a fixed standard deviation $O(\sigma)$, PEC requires implementing $C_\textup{tot}^2/\sigma^2$ circuits, where $C_\textup{tot}$ typically grows exponentially with $m$. For example, when all the noisy cycles have the same error probability $\varepsilon$, we have $C_\textup{tot}=O((1-\varepsilon)^{-2m})$. Thus, in general PEC (as well as all the other protocols based on quasi-probabilistic cancellation \cite{TEMG21}) is inefficient due to the exponential scaling of the cost with the circuit depth. Nevertheless, if applied to circuits with depth $m\lesssim\varepsilon^{-2}$, PEC remains an efficient and practical solution.

\subsection{Noiseless Output Extrapolation}
\label{sec:ce}
NOX relies on the ability to amplify the noise afflicting individual  noisy cycles in the input circuit. Specifically, it requires replacing the noisy operations $\D_{\h_j}\h_j$ with $\D_{\h_j}^{\alpha}\h_j$ for integers $\alpha>1$. We begin this subsection by explaining how this amplification may be performed, then we describe NOX.

The traditional method to amplify the noise is the so-called ``Identity Insertion''~\cite{HNdYB20}, which consists of replacing a noisy cycle $\h_j$ with $\h_j(\h_j\h_j^{-1})^\alpha$. This method is efficient and is used by a number of other EM protocols~\cite{GTHLMZ20,LRMKSZ20,HNdYB20,KWYMGTK21,BHdJNP21}, but it is accurate only if two conditions are satisfied: Firstly, if $\h_j$ and $\h_j^{-1}$ are afflicted by identical noise. Secondly, if the noise and the cycle commute, i.e., $\D_{\h_j}\h_j=\h_j\D_{\h_j}$. The first condition is trivially satisfied by cycles for which $\h_j=\h_j^{-1}$, for example, by cycles containing a combination of $cZ$ and $cX$ gates. The second condition is satisfied by specific noise processes, e.g.~by the $n$-qubit depolarising channel, but not in general \cite{KWYMGTK21}. Importantly, CER allows checking if these two conditions are satisfied and to evaluate the accuracy of Identity Insertion before employing it in an experiment.

While we do not attempt to improve Identity Insertion, we propose an alternative method that can correctly amplify arbitrary noise processes. Our method (which we call ``Append Errors'') takes as input the circuit $\C$, a label $j\in\{1,\ldots,m\}$, the Pauli error rates $\epsilon_k^{(\h_j)}$ of the $j$th noisy cycle and the amplification factor $\alpha>1$. It returns the circuit
\begin{align}
\label{eq:append_errors}
    &\C'(j;\:k_1,\ldots,k_{\alpha-1})=\\
    &\E_{m+1}\big(\circ_{k=j+1}^m\h_k\E_k\big)\Q_{k_{\alpha-1}}\cdots\Q_{k_1}\h_j\E_j\big(\circ_{k=1}^{j-1}\h_k\E_k\big)\:,\nonumber
\end{align}
where each $\Q_{k_l}\in\{\I,\X,\Y,\Z\}^{\otimes n}$ is an $n$-qubit Pauli operator chosen at random with probability $\epsilon^{(\h_j)}_{k_l}$. Since $\sum_{k=0}^{4^n-1}\epsilon_k^{(\h_j)}\Q_k=\D_{\h_j}$, on average a noisy implementation of $\C'(j;\:k_1,\ldots,k_{\alpha-1})$ performs the operation
\begin{align}
\label{eq:circ_amplified_j}
    \widetilde{\C}'_{\h_j,\alpha}=
    &\sum_{k_1,\ldots,k_{\alpha-1}}\epsilon^{(\h_j)}_{k_{\alpha-1}}\cdots\epsilon^{(\h_j)}_{k_{1}}\:\C'(j;\:k_1,\ldots,k_{\alpha-1})\\
    =& \; \E_{m+1}\D_{\h_m}\h_m\E_m\cdots\big(\D_{\h_j} \big)^{\alpha}\:\h_j\cdots\h_1\E_1\:.\nonumber
\end{align}
This corresponds to the operation $\widetilde{\C}$ implemented by the noisy circuit except for the noise on the $j$th  noisy cycle, which is amplified by a factor $\alpha$.

We can now move onto presenting NOX. NOX (formally described section II of Supplementary Material) takes as input the circuit $\C$, an $n$-qubit state $\rho_\textup{in}$, an operator $O$ such that $||O||_\infty\sim 1$, a number $\sigma\in(0,1)$ representing the desired standard deviation of the results, an integer $\alpha>1$ and a Boolean $\textsf{id\_insert}\in\{\textsf{True},\textsf{False}\}$. It requires running $m+1$ circuits in total. The first of these circuits is identical to the input circuit, while the other $m$ circuits contain one  noisy cycle with noise amplified by a factor $\alpha$. If $\textsf{id\_insert}=\textsf{True}$, the noise amplification is performed with Identity Insertion, otherwise it is performed with Append Errors. Each circuit is implemented $m^2/(\alpha-1)^2\sigma^2$ times and yields a noisy estimator of $E(O)$. We denote with $\widetilde{E}_\textup{in}(O)$ the noisy estimator returned by the circuit that is identical to the input one, and by $\widetilde{E}_{\h_j,\alpha}(O)$ that returned by the circuit with amplified noise on the $j$th  noisy cycle. After running all the $m+1$ circuits, NOX returns the quantity
\begin{equation}
    \widehat{E}_\textup{NOX}(O)=\widetilde{E}_\textup{in}(O) + \sum_{j=1}^{m} \frac{\widetilde{E}_\textup{in}(O) - \widetilde{E}_{\h_j,\alpha}(O)}{\alpha - 1}\:.
\end{equation}
This quantity is yet another estimator of $E(O)$, but it is significantly more accurate than the noisy estimators. The following theorem states its standard deviation and bias, under the assumption that the noise amplification is performed exactly:
\begin{theorem}
\label{th:ce}
(Proof in section II of Supplementary Material). Let us assume that the noise of every  noisy cycle $\h_j$ in the input circuit can be amplified exactly by a factor $\alpha>1$. Under the assumptions A1 and A2 (section \ref{sec:notation_and_assumptions}), the number $\widehat{E}_\textup{NOX}(O)$ returned by our NOX protocol is an estimator of $E(O)$ with standard deviation $O(\sigma)$ and bias
\begin{align}
\delta_\textup{NOX}=O\bigg(\alpha\sum_{j=1}^{m}(1-\epsilon_0^{(\h_j)})\sum_{l=j}^{m}(1-\epsilon_0^{(\h_l)})\bigg)\:.
\end{align}
\end{theorem}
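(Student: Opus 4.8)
The plan is to establish the two claims—the $O(\sigma)$ standard deviation and the bias bound—separately, the bias being the substantive part. Throughout, abbreviate $\lambda_j:=1-\epsilon_0^{(\h_j)}$ and note that the NOX estimator can be rewritten as $\widehat{E}_\textup{NOX}(O)=\big(1+\tfrac{m}{\alpha-1}\big)\widetilde{E}_\textup{in}(O)-\tfrac{1}{\alpha-1}\sum_{j=1}^{m}\widetilde{E}_{\h_j,\alpha}(O)$. For the standard deviation: since $||O||_\infty\sim1$, every single-shot outcome lies in an $O(1)$ interval, so the empirical mean of each of the $m+1$ mutually independent circuits over its $m^2/(\alpha-1)^2\sigma^2$ repetitions has variance $O\big((\alpha-1)^2\sigma^2/m^2\big)$. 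Summing variances with the coefficients above, the $\widetilde{E}_\textup{in}$ term contributes $O\big((1+\tfrac{m}{\alpha-1})^2(\alpha-1)^2\sigma^2/m^2\big)=O(\sigma^2)$ and the remaining sum contributes $\tfrac{1}{(\alpha-1)^2}\cdot m\cdot O\big((\alpha-1)^2\sigma^2/m^2\big)=O(\sigma^2/m)$, so $\textup{Var}\,\widehat{E}_\textup{NOX}(O)=O(\sigma^2)$.

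For the bias, I would first use unbiasedness of the empirical means together with the assumed exactness of the amplification (Eq.~\ref{eq:circ_amplified_j}) to get $\mathbb{E}\big[\widehat{E}_\textup{NOX}(O)\big]=E_\textup{in}(O)+\sum_{j=1}^m\tfrac{E_\textup{in}(O)-E_{\h_j,\alpha}(O)}{\alpha-1}$, where $E_\textup{in}(O)=\textup{Tr}[O\widetilde{\C}(\rho_\textup{in})]$ and $E_{\h_j,\alpha}(O)=\textup{Tr}[O\widetilde{\C}'_{\h_j,\alpha}(\rho_\textup{in})]$. Writing each Pauli channel as $\D_{\h_j}=\textup{id}-\Lambda_j$ with $\Lambda_j=\sum_{k\ge1}\epsilon_k^{(\h_j)}\big(\textup{id}-\p_k(\cdot)\p_k\big)$, one has $||\Lambda_j||_\diamond\le 2\lambda_j$; since CPTP maps are diamond-norm contractive and $||O||_\infty\sim1$, any term $\textup{Tr}[O\,\B\,\Lambda_{j_1}\cdots\Lambda_{j_r}\A(\rho_\textup{in})]$ with $\A,\B$ compositions of (sub)circuit channels is $O(\lambda_{j_1}\cdots\lambda_{j_r})$. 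Comparing the $\alpha$-amplified circuit with the input one isolates one cycle, $\widetilde{\C}-\widetilde{\C}'_{\h_j,\alpha}=\B_j\big(\D_{\h_j}-\D_{\h_j}^{\alpha}\big)\A_j$, with $\A_j$ (resp.\ $\B_j$) the noisy channel implemented by the cycles up to and including $\h_j$ (resp.\ strictly after $\h_j$); expanding $(\textup{id}-\Lambda_j)^{\alpha}$ binomially gives $\D_{\h_j}-\D_{\h_j}^{\alpha}=(\alpha-1)\Lambda_j+\Delta_j$ with $||\Delta_j||_\diamond\le(1+2\lambda_j)^{\alpha}-1-2\alpha\lambda_j=O(\alpha^2\lambda_j^2)$, so
\[
\frac{E_\textup{in}(O)-E_{\h_j,\alpha}(O)}{\alpha-1}=\textup{Tr}\big[O\,\B_j\Lambda_j\A_j(\rho_\textup{in})\big]+O\big(\alpha\lambda_j^2\big),
\]
using $\tfrac{\alpha^2}{\alpha-1}\le2\alpha$ for $\alpha\ge2$.

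Next I would expand $\widetilde{\C}=\E_{m+1}\circ_{j=1}^{m}(\textup{id}-\Lambda_j)\h_j\E_j$ multilinearly in the $\Lambda$'s: the order-zero term is $\C$, the order-one terms are $-\sum_{j}\B_j^{(0)}\Lambda_j\A_j^{(0)}$ with $\A_j^{(0)},\B_j^{(0)}$ the \emph{noiseless} surrounding channels (so that $\C=\B_j^{(0)}\A_j^{(0)}$), and all order-$\ge2$ terms together have diamond norm $\prod_{j}(1+2\lambda_j)-1-2\sum_j\lambda_j=O\big((\sum_j\lambda_j)^2\big)$ in the moderate-noise regime $\sum_j\lambda_j=O(1)$. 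Replacing the noisy $\A_j,\B_j$ in the display above by $\A_j^{(0)},\B_j^{(0)}$ costs $O\big(\lambda_j\sum_{k\ne j}\lambda_k\big)$ each (since $||\A_j-\A_j^{(0)}||_\diamond=O(\sum_{k<j}\lambda_k)$ and $||\B_j-\B_j^{(0)}||_\diamond=O(\sum_{k>j}\lambda_k)$); the order-one contributions then cancel exactly against those in the expansion of $E_\textup{in}(O)$, leaving
\[
\mathbb{E}\big[\widehat{E}_\textup{NOX}(O)\big]=\textup{Tr}[O\C(\rho_\textup{in})]+O\Big(\alpha\textstyle\sum_j\lambda_j^2\Big)+O\Big(\big(\textstyle\sum_j\lambda_j\big)^2\Big).
\]
Finally, since $\sum_{j}\lambda_j\sum_{l\ge j}\lambda_l=\sum_j\lambda_j^2+\sum_{j<l}\lambda_j\lambda_l$ is within a factor $2$ of $(\sum_j\lambda_j)^2$ and dominates $\sum_j\lambda_j^2$, both remainders are $O\big(\alpha\sum_{j}\lambda_j\sum_{l\ge j}\lambda_l\big)=\delta_\textup{NOX}$.

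The main obstacle will be the bookkeeping of this multilinear expansion: making the cancellation of the order-one terms precise means matching the lever-rule weight $1/(\alpha-1)$ against the binomial coefficients of $\D_{\h_j}^\alpha$, and then bounding \emph{every} higher-order remainder—the per-cycle amplification error $\Delta_j$, the cross terms in the expansion of $\widetilde{\C}$, and the noisy-to-noiseless substitution of $\A_j,\B_j$—uniformly in diamond norm while carrying along $||O||_\infty\sim1$ and trace-norm contractivity of channels and states. One also has to verify that the surviving double sum is genuinely controlled by the asymmetric expression $\sum_{j}\lambda_j\sum_{l\ge j}\lambda_l$ in the statement (and not merely by the symmetric $(\sum_j\lambda_j)^2$), and to keep the implicit moderate-noise hypothesis $\sum_j\lambda_j=O(1)$ in view throughout.
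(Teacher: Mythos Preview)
Your argument is correct and follows the same perturbative logic as the paper, but the paper organizes the expansion differently. Instead of your fully symmetric multilinear expansion of $\widetilde{\C}$ in all the $\Lambda_j$'s, the paper proves and uses a \emph{telescopic identity} (their Lemma~3),
\[
\widetilde{\Phi}_m-\Phi_m=\sum_{j=1}^m\Big(\circ_{k>j}\U_k\Big)\,\delta_j\,\U_j\Big(\circ_{k<j}\D_k\U_k\Big),
\]
which is exact (no remainder) and deliberately asymmetric: noiseless cycles after position $j$, noisy cycles before. This immediately gives $\widetilde{E}_\textup{in}(O)-E(O)=\sum_j A_j$ with the $A_j$ already in a form that matches the correction term $(\widetilde{E}_{\h_j,\alpha}-\widetilde{E}_\textup{in})/(\alpha-1)$ up to a single further telescopic expansion of the ``after'' part; that second application is what produces the $\sum_{l\ge j}$ structure in $\delta_\textup{NOX}$ directly. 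By contrast, your route first produces the symmetric remainder $O\big((\sum_j\lambda_j)^2\big)$ and you then have to argue separately that it is dominated by the asymmetric expression in the statement (which you do correctly). The paper's telescoping buys a cleaner match to the stated bound and avoids the noisy-to-noiseless substitution step for $\A_j,\B_j$; your approach buys conceptual simplicity (one expansion rule applied uniformly) at the price of that extra substitution estimate and the final symmetry comparison. Both lead to the same $\delta_\textup{NOX}$.
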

Overall, while the biases of the noisy estimators $\widetilde{E}_\textup{in}(O)$ and $\widetilde{E}_{\h_j,\alpha}(O)$ grow linearly with the cycles' error rates, the bias of $\widehat{E}_\textup{NOX}(O)$ only grows quadratically.
Note that the bias $\delta_\textup{NOX}$ also grows linearly with $\alpha$. Thus, choosing small values of $\alpha$ leads to better performance for NOX.

Assuming that the noise can be amplified exactly is unrealistic, both for Identity Insertion (since the noise may commute approximately but not exactly) and for Append Errors (since inevitable inaccuracies in the estimation of the Pauli error rates may lead to an imperfect amplification). To relax this assumption we prove the following lemma:

\begin{lemma}
\label{lem:relax_ce}
(Proof in section III of Supplementary Material). Let $\D_{\h_j}^\alpha\h_j$ be a noisy implementation of $\h_j$ with noise amplified exactly by a factor $\alpha$, and let $\widetilde{\R}_{\h_j}\D_{\h_j}\h_j$ be an implementation of $\h_j$ with noise amplified imperfectly. Under assumptions A1 and A2, the estimator $\widehat{E}_\textup{NOX}(O)$ returned by our NOX protocol has bias $\delta_\textup{NOX}'=\delta_\textup{NOX}+\delta_\textup{rec}$, where $\delta_\textup{NOX}$ is the bias in theorem \ref{th:ce} and
\begin{equation}
    \delta_\textup{rec}=O\bigg(\sum_{j=1}^m||\widetilde{\R}_{\h_j}-\D_{\h_j}^{\alpha-1}||_\diamond\bigg)\:.
\end{equation}
\end{lemma}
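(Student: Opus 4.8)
The plan is to establish the bound by a cycle-by-cycle stability argument, closed off with the triangle inequality and Theorem~\ref{th:ce}. Write $\widehat{E}^{\textup{ex}}_\textup{NOX}(O)$ for the expectation of the estimator NOX would return if the noise of \emph{every} cycle were amplified exactly (so that the $j$th amplified circuit realises $\D_{\h_j}^{\alpha}\h_j$), and $\widetilde{E}^{\textup{ex}}_{\h_j,\alpha}(O)$ for the associated $j$th amplified-circuit estimator; let $\widehat{E}_\textup{NOX}(O)$ and $\widetilde{E}_{\h_j,\alpha}(O)$ be their imperfectly-amplified counterparts, in which the $j$th amplified circuit realises $\widetilde{\R}_{\h_j}\D_{\h_j}\h_j$. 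Since the first of the $m+1$ circuits run by NOX is the unamplified copy of $\C$ and is the same in both cases, $\widetilde{E}_\textup{in}(O)$ is unchanged, and subtracting the two expressions for $\widehat{E}_\textup{NOX}(O)$ gives
\begin{equation}
\widehat{E}_\textup{NOX}(O)-\widehat{E}^{\textup{ex}}_\textup{NOX}(O)=\frac{1}{\alpha-1}\sum_{j=1}^{m}\Big(\widetilde{E}^{\textup{ex}}_{\h_j,\alpha}(O)-\widetilde{E}_{\h_j,\alpha}(O)\Big)\:.
\end{equation}
It therefore suffices to bound each $\big|\widetilde{E}^{\textup{ex}}_{\h_j,\alpha}(O)-\widetilde{E}_{\h_j,\alpha}(O)\big|$.

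To do so I would localise the perturbation inside the circuit. Fix $j$; let $\A_j:=\E_{m+1}\circ\big(\circ_{k=j+1}^{m}\D_{\h_k}\h_k\E_k\big)$ be the CPTP map implemented by everything downstream of the $j$th cycle's noise, and let $\rho_j:=\D_{\h_j}\h_j\E_j\big(\circ_{k=1}^{j-1}\D_{\h_k}\h_k\E_k\big)(\rho_\textup{in})$ be the density matrix prepared by the original noisy circuit $\widetilde{\C}$ immediately after $\D_{\h_j}$. Using $\D_{\h_j}^{\alpha}=\D_{\h_j}^{\alpha-1}\D_{\h_j}$, the $j$th amplified circuit outputs $\A_j\D_{\h_j}^{\alpha-1}(\rho_j)$ in the exact case and $\A_j\widetilde{\R}_{\h_j}(\rho_j)$ in the imperfect case, so
\begin{align}
\big|\widetilde{E}^{\textup{ex}}_{\h_j,\alpha}(O)-\widetilde{E}_{\h_j,\alpha}(O)\big| &=\Big|\textup{Tr}\big[O\,\A_j\big(\D_{\h_j}^{\alpha-1}-\widetilde{\R}_{\h_j}\big)(\rho_j)\big]\Big|\nonumber\\
&\leq||O||_\infty\,||\A_j\big(\D_{\h_j}^{\alpha-1}-\widetilde{\R}_{\h_j}\big)(\rho_j)||_1\nonumber\\
&\leq||O||_\infty\,||\widetilde{\R}_{\h_j}-\D_{\h_j}^{\alpha-1}||_\diamond\:,
\end{align}
where the first step is H\"older's inequality and the second uses submultiplicativity of the diamond norm, $||\A_j||_\diamond=1$ (as $\A_j$ is CPTP), and $||\rho_j||_1=1$. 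Since $||O||_\infty\sim1$, each summand is $O\big(||\widetilde{\R}_{\h_j}-\D_{\h_j}^{\alpha-1}||_\diamond\big)$.

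Combining the last two displays with $1/(\alpha-1)\leq1$ and the triangle inequality,
\begin{equation}
\big|\widehat{E}_\textup{NOX}(O)-E(O)\big|\leq\big|\widehat{E}^{\textup{ex}}_\textup{NOX}(O)-E(O)\big|+O\bigg(\sum_{j=1}^{m}||\widetilde{\R}_{\h_j}-\D_{\h_j}^{\alpha-1}||_\diamond\bigg)\:,
\end{equation}
and by Theorem~\ref{th:ce} the first term on the right is $\delta_\textup{NOX}$ while the second is exactly the claimed $\delta_\textup{rec}$, giving $\delta_\textup{NOX}'=\delta_\textup{NOX}+\delta_\textup{rec}$. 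Two preliminary reductions, both inherited from the proof of Theorem~\ref{th:ce}, are needed before running this argument: one passes to the channel level, noting that each per-circuit estimator is unbiased for the corresponding expectation value after Randomized Compiling and that averaging the Append-Errors insertions over the \emph{estimated} Pauli rates produces an effective noise map of the form $\widetilde{\R}_{\h_j}\D_{\h_j}$, so that statistical fluctuations in the rates (and, for Identity Insertion, residual non-commutativity of noise and gate after repeated amplification) are all subsumed into the single quantity $||\widetilde{\R}_{\h_j}-\D_{\h_j}^{\alpha-1}||_\diamond$; and one uses that in each amplified circuit only a single cycle is perturbed, which is what the factorisation above exploits. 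I expect the only genuine subtlety to be bookkeeping: inserting the perturbation on the correct side of $\h_j$ and of $\D_{\h_j}$ so that it is $\D_{\h_j}^{\alpha-1}-\widetilde{\R}_{\h_j}$—and not, say, $\D_{\h_j}^{\alpha}-\widetilde{\R}_{\h_j}\D_{\h_j}$—whose diamond norm controls the per-cycle error, and confirming that the $1/(\alpha-1)$ factors are harmlessly absorbed into the $O(\cdot)$.
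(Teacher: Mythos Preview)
Your proposal is correct and follows essentially the same route as the paper's own proof: you split $\widehat{E}_\textup{NOX}(O)$ into the perfectly-amplified NOX estimator plus a sum of per-cycle discrepancies, bound each discrepancy via H\"older's inequality and the diamond-norm contraction of the surrounding CPTP maps, and then invoke Theorem~\ref{th:ce}. Your presentation is in fact slightly more explicit than the paper's (you spell out the factorisation $\A_j(\cdot)\rho_j$ and the role of $||\A_j||_\diamond=1$, whereas the paper jumps directly from $\|(\U'_{\h_j,\alpha}-\U_{\h_j,\alpha})(\rho_\textup{in})\|_1$ to $\|\widetilde{\R}_{\h_j}-\D_{\h_j}^{\alpha-1}\|_\diamond$), but the underlying argument is identical.
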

The above lemma is analogous to lemma \ref{lem:relax_PEC} for NOX, as it proves that an accurate but imperfect (i.e. a \textit{realistic}) noise amplification can still guarantee a high performance of our PEC protocol. Overall, if the noise is amplified perfectly, the bias of NOX grows quadratically in the cycles' error rate $\varepsilon$ as $\delta_\textup{NOX}=O(m^2\varepsilon^2)$, where for simplicity we assume that every cycle has the same error probability $\varepsilon$. If the noise is amplified imperfectly, we can expect $\delta_\textrm{rec}$ to grow linearly in $\varepsilon$ as $\delta_\textrm{rec}=O(\gamma m \varepsilon)$, where $\gamma$ is proportional to the inaccuracy in the noise amplification. Therefore, performing an accurate noise amplification is vital to ensure that the residual bias $\delta_\textup{NOX}'$ remains as low as possible.

Unlike PEC, NOX requires running a number of circuits that does not depend on the cycles' error rate. In particular, to achieve the desired standard deviation, NOX requires initialising $m+1$ circuits and running each of them $O(m^2)$ times. Thus, NOX has runtime $O(m^3)$ and is efficient in $m$. This result may seem to contradict Ref.~\cite{TEMG21}, which shows that the EM protocols are fundamentally inefficient in the circuit depth. However, Ref.~\cite{TEMG21} only considers protocols that have a fixed bias, independent of the circuit depth, while the bias of NOX grows quadratically with $m$.

We conclude this section by clarifying the differences between NOX and the existing protocols based on noise amplification. NOX can be seen as a noise-aware generalisation of the ``Random Identity Insertion Method''~(RIIM) presented in Ref.s~\cite{HNdYB20}, which is built around Identity Insertion. Even though NOX and RIIM undertake similar approaches, crucial differences exist between the two protocols. In particular, RIIM targets individual $cX$ gates afflicted by local depolarising noise. Being a noise-agnostic technique, by construction it is unable to correctly amplify (and therefore to suppress) noise processes that do not commute with the $cX$ gates~\cite{KWYMGTK21}. On the contrary, NOX targets entire cycles afflicted by a broad class of noise processes, including non-local and non-depolarising processes. By using Randomized Compiling in combination with CER, NOX can evaluate the the ability of Identity Insertion to correctly amplify the noise, and potentially use Append Errors to ensure a more precise amplification. This makes NOX more reliable than RIIM as well as more widely applicable.

\section{Our experiments}
\label{sec:experiment}

%We perform our experiments on the superconducting chip in Fig.~\ref{fig:chip}. The distinct experiments (W-state circuits, Quantum Phase Estimation, and pseudo-random circuits) were performed at different points in time, but each individual experiment was measured continuously without interruption. All single- and two-qubit gates were recalibrated before each experiment, but no recalibration was performed during the duration of the experiment.

We begin this section by discussing our strategy for testing NOX and PEC. Next, we present the results of our experiments.

\subsection{Our testing strategy}
We conduct both numerical and experimental testing of PEC and NOX. The numerical testing allows us to evaluate the performance of our protocols in an ideal scenario in which the Pauli error rates of every cycle are known exactly and the assumptions A1 and A2 apply. In every simulation we model the cycles' noise based on the CER data collected in the corresponding experiment, and for simplicity we consider noiseless state preparation and measurements. On the other hand, with the experimental testing we investigate the performance of our protocols in a real-world setting, where the noise is known approximately but not exactly and slight deviations from A1 and A2 are to be expected|for example, non-Markovian errors have been previously observed on the chip that we use for our experiments \cite{KennethAndOthers21}.
 
We begin every experiment by characterizing the noise with CER. This typically takes around twenty minutes per noisy cycle. Next, we run the input circuit several times, with and without EM, in order to gather statistics for the final estimators. In addition to PEC and NOX we employ standard readout error mitigation (REM) protocols to mitigate measurement errors~\cite{BSKMG20}. The runtimes per repetition are usually within the hour. For example, for the three-qubit quantum phase estimation circuits (which contain $m=11$ noisy cycles), for $\sigma=2\%$ we require around one minute to compute an estimator for the unmitigated circuit, around ten minutes for the NOX estimator and around twenty minutes for the PEC estimator. We note that the errors afflicting idling qubits are the dominant type of error in our  device~(Fig.~\ref{fig:KNR} in Appendix~\ref{app:KNR}). As these errors commute with the noisy cycles in our circuits, to amplify the noise in NOX we use Identity Insertion with $\alpha=3$.

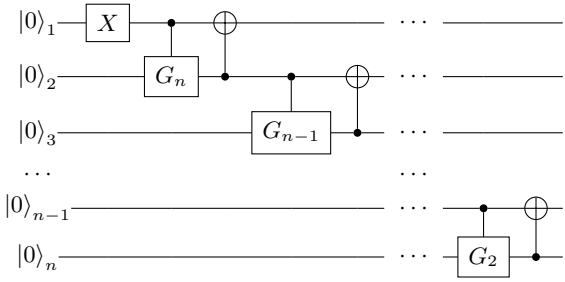
\begin{figure}[t]
\footnotesize
\begin{quantikz}[thin lines,
row sep=0.2cm,
column sep=0.2cm]
\ket{0}_1 & 
\gate{X} & 
\ctrl{1} &
\targ{} &
\qw &
\qw &
\qw &
\cdots & {}&
\qw &
\qw &
\qw \\
\ket{0}_2 & 
\qw & 
\gate{G_{n}} & 
\ctrl{-1} &
\ctrl{1} &
\targ{} &
\qw &
\cdots & {}&
\qw &
\qw &
\qw  \\
\ket{0}_3 & 
\qw & 
\qw &
\qw &
\gate{G_{n-1}} &
\ctrl{-1} &
\qw &
\cdots & {}&
\qw &
\qw &
\qw  \\
\cdots&&&&&&&\cdots&\\
\ket{0}_{n-1} &
\qw & 
\qw &
\qw &
\qw &
\qw &
\qw &
\cdots & {}&
\ctrl{1} &
\targ{} &
\qw \\
\ket{0}_n & 
\qw & 
\qw &
\qw &
\qw &
\qw &
\qw &
\cdots & {}&
\gate{G_{2}} &
\ctrl{-1} &
\qw 
% \meter{} \\
\end{quantikz}
\caption{Circuit to generate an $n$-qubit W state \cite{CruzEtAl19}. The gates $G_t$ are defined in Eq.~\ref{eq:controlled-G}.}
\label{fig:w_circ}
\end{figure}

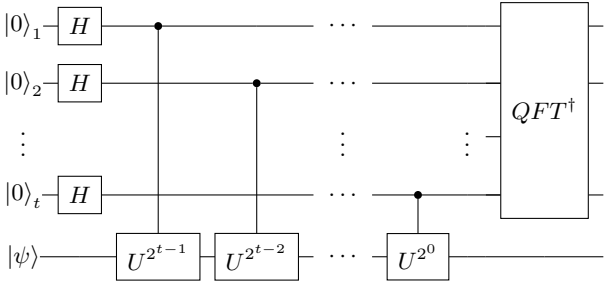
\begin{figure}[t]
\footnotesize
\begin{quantikz}[thin lines,
row sep=0.2cm,
column sep=0.2cm]
\ket{0}_1 & 
\gate{H} & 
\ctrl{4} &
\qw &
\qw &
\cdots & {}&
\qw &
\qw &
\qw & 
\gate[wires=4]{QFT^\dagger} &
\qw \\
\ket{0}_2 & 
\gate{H} & 
\qw & 
\ctrl{3} &
\qw &
\cdots &{}&
\qw &
\qw &
\qw &
\qw &
\qw \\
\vdots&{}&{}&{}&{}&\vdots&{}&{}&\vdots{}&{}&{}&{}\\
\ket{0}_{t} & 
\gate{H} & 
\qw &
\qw &
\qw &
\ldots & {}&
\ctrl{1} &
\qw &
\qw &
\qw &
\qw \\
\ket{\psi} &
\qw &
\gate{U^{2^{t-1}}} &
\gate{U^{2^{t-2}}} &
\qw &
\cdots & {}&
\gate{U^{2^{0}}} &
\qw &
\qw &
\qw &
\qw 
% \meter{} \\
\end{quantikz}
\caption{Circuit to perform the QPE algorithm. In our tests we set $U=R_Z(\kappa)$, where $R_Z(\kappa)$ is defined in Eq.~\ref{eq:rz_QPE}, and run the algorithm for different values of $\kappa$.}
\label{fig:QPE_circ}
\end{figure}

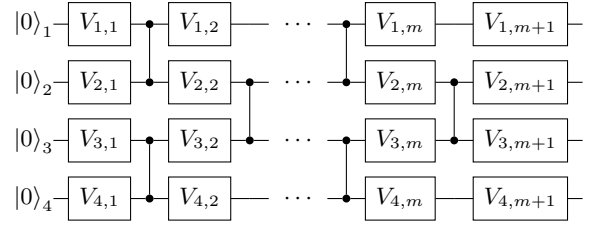
\begin{figure}[t]
\footnotesize
\begin{quantikz}[thin lines,
row sep=0.2cm,
column sep=0.2cm]
\ket{0}_1 & 
\gate{V_{1,1}} & 
\ctrl{1} &
\gate{V_{1,2}} &
\qw &
\qw &
\cdots & {} &\ctrl{1}&
\gate{V_{1,m}} &
\qw &
\gate{V_{1,m+1}} & 
\qw \\
\ket{0}_2 & 
\gate{V_{2,1}} & 
\control{} &
\gate{V_{2,2}} &
\ctrl{1}&
\qw &
\cdots & {} &\control{}&
\gate{V_{2,m}} &
\ctrl{1} &
\gate{V_{2,m+1}} & 
\qw \\
\ket{0}_3 & 
\gate{V_{3,1}} & 
\ctrl{1} &
\gate{V_{3,2}} &
\control{}&
\qw &
\cdots & {} &\ctrl{1}&
\gate{V_{3,m}} &
\control{} &
\gate{V_{3,m+1}} & 
\qw \\
\ket{0}_4 & 
\gate{V_{4,1}} & 
\control{} &
\gate{V_{4,2}} &
\qw &
\qw &
\cdots & {} &\control{}&
\gate{V_{4,m}} &
\qw &
\gate{V_{4,m+1}} & 
\qw \\
\end{quantikz}
\caption{Four-qubit pseudo-random circuits of the type implemented in our third test. Each gate $V_{i,j}$ is a random one-qubit gate.}
\label{fig:random_circuit}
\end{figure}

\subsection{W-state circuits}
\label{sec:w-state}
In our first test we implement our protocols on circuits that generate W states. W states are a special type of multipartite entangled states that play a central role in quantum communication, memories and networks~\cite{DP05,YBOATYO16}. An $n$-qubit W state can be written as an equal superposition of all the weight-one basis states, namely as
\begin{equation}
\label{eq:w_states}
    \ket{W_n}=\frac{\ket{0\ldots01}+\ket{0\ldots10}+\ket{1\ldots00}}{\sqrt{n}}\:.
\end{equation}
Fig.~\ref{fig:w_circ} shows a circuit to produce $n$-qubit W states in a linear architecture with nearest-neighbouring connectivity~\cite{CruzEtAl19}. This circuit contains $n-1$ controlled gates implementing the operation $|0\rangle\langle0|\otimes I+|1\rangle\langle1|\otimes G_t$, where
\begin{equation}
    \label{eq:controlled-G}
    G_t=
    \begin{pmatrix}
    \sqrt{\frac{1}{t}} & - \sqrt{1-\frac{1}{t}}\\\\
    \sqrt{1-\frac{1}{t}} & \sqrt{\frac{1}{t}}
    \end{pmatrix}\:.
\end{equation}
Each one of these gates is followed by a $cX$ gate. After recompiling the entangling gates into our native set ($cZ$ gates between nearest neighbours), the resulting circuit contains $3(n-1)$  noisy cycles, each one comprising one $cZ$ gate and two identity gates.

In our tests we generate W states with $n=2,\:3$ and 4 qubits. By measuring these states in the computational basis we estimate the probability associated with each one of the possible outputs. Formally, we compute the quantity $p_\textrm{est}(\bar{s})=$Tr$\big[O_{\bar{s}}|W_n\rangle\langle W_n|\big]$ for every projector $O_{\bar{s}}\in\{|{\bar{s}}\rangle\langle{\bar{s}}|\::\:{\bar{s}}\in(0,1)^{\otimes n}\}$. Fig.~\ref{fig:w_states} shows the estimates for the most frequent outputs obtained in the simulations (Fig.~\ref{fig:w1}) and in the experiment (Fig.~\ref{fig:w2}). As it can be seen, all the estimates returned by NOX and PEC concentrate around (or very close to) their ideal value, 
%with a standard deviation around $\sigma$
whereas the unmitigated estimates are generally inaccurate. To better quantify the improvement provided by our EM protocols we calculate the variation distance
\begin{equation}
\label{eq:def_VD}
    \textrm{VD}:=\frac{1}{2}\sum_{\bar{s}}\big|p_\textup{id}(\bar{s})-p_\textup{est}(\bar{s})\big|
\end{equation}
 between $\{p_\textrm{est}(\bar{s})\}$ and the ideal probability distribution of the outputs $\{p_\textrm{id}(\bar{s})\}$. As shown in Fig.s \ref{fig:w3} and \ref{fig:w4}, the variation distances of the mitigated outputs are significantly smaller than those of the unmitigated outputs, with average improvements between $47\%$ and $66\%$ for the experimental outputs~(Fig.~\ref{fig:improvements}).

\subsection{Quantum Phase Estimation algorithm}
In our second test we run a quantum phase estimation~(QPE) experiment. QPE is an important primitive required by many quantum algorithms, including Shor's factoring algorithm \cite{NC00}. Given a gate $U$, a ``target'' state $\ket{\psi}$ and a (potentially unknown) number $\kappa\in[0,1)$ such that $U\ket{\psi}=\textup{exp}(2i\kappa\pi)\ket{\psi}$, the task of QPE is to produce an estimate $\widehat{\kappa}$ of $\kappa$. To do so, QPE requires initialising $t\geq1$ ancillae, entangling each of them with $\ket{\psi}$ and eventually performing the inverse Quantum Fourier Transform~(QFT), as shown in Fig.~\ref{fig:QPE_circ}. By measuring the ancillae and post-processing the outputs, QPE returns an estimate $\widehat{\kappa}\in\{p/2^t\::\:p\in\{0,\ldots,2^t-1\}\}$ such that $|\kappa-\widehat{\kappa}|\leq2^{-t}$ with high probability. 

Setting $t=2$ and $\ket{\psi}=\ket{1}$, we estimate the parameter $\kappa$ for a series of gates that perform rotations of the type
\begin{equation}
\label{eq:rz_QPE}
R_Z(\kappa)=\textup{diag}(1, \textrm{exp}(2i\kappa\pi))\:.
\end{equation}
Decomposed into our native gateset, our QPE circuits contain $n=3$ qubits and $m=11$  noisy cycles. As with the W-state circuits, we measure all the qubits (target and ancillae) in the computational basis and reconstruct the probability distribution of the outputs. Fig.~\ref{fig:QPE} shows the variation distances between ideal and estimated probability distributions of the outputs obtained in the simulations (Fig.~\ref{fig:QPE1}) and in the experiments (Fig.~\ref{fig:QPE2}). In both cases the mitigated outputs are significantly more accurate than the unmitigated ones, with average improvements between $42\%$ and $86\%$ in variation distance for the experimental outputs (Fig.~\ref{fig:improvements}).

The  better  accuracy  of  the  outputs  under  EM  naturally improves the precision of the QPE algorithm. To see this, by post-processing the estimated probability distributions of the outputs of the ancillae we calculate the probability $q_\textrm{est}(\widehat{\kappa}|\kappa)$ that the QPE algorithm returns $\widehat{\kappa}\in\{0.00,0.25,0.50,0.75\}$ when $\kappa$ is the parameter being estimated. Fig.~\ref{fig:QPE_kappas_3qb} shows the probabilities $q_\textrm{est}(\widehat{\kappa}|\kappa)$ calculated in the various experiments (solid bars), along with the ideal probabilities $q_\textrm{id}(\widehat{\kappa}|\kappa)$ calculated with a noiseless simulation (striped bars). The probabilities $q_\textrm{est}(\widehat{\kappa}|\kappa)$ obtained in the experiments with PEC and NOX are generally closer to the ideal ones than those obtained in the experiments without EM. To quantify the improvement, we calculate the variation distances
\begin{equation}
\label{eq:vd_QPE}
    \textrm{VD}_\textrm{QPE}^{(\kappa)}:=\frac{1}{2}\sum_{\widehat{\kappa}}\big|q_\textrm{id}(\widehat{\kappa}|\kappa)-q_\textrm{est}(\widehat{\kappa}|\kappa)\big|
\end{equation}
between the ideal and estimated probability distributions of the outcomes of QPE for the values of $\kappa$ chosen in our experiments. As shown in Table~\ref{tab:vd_QPE}, NOX and PEC drastically improve the precision of the QPE algorithm in all the cases considered.

We repeat the above experiment with $t=3$ ancillae, setting $\ket{\psi}$ and $\kappa=0.5$. Decomposed into our native gateset, the resulting QPE circuit contains $n=4$ qubits and $m=25$ noisy cycles. As opposed to the experiments with $t=2$ ancillae, both NOX and PEC return less accurate outputs than the unmitigated circuit and visibly decrease the precision of the QPE algorithm (Fig.~\ref{fig:QPE_kappas_4qb}). We attribute this unsuccessful result to a series of noise processes (unmodeled non-Markovian errors~\cite{KennethAndOthers21}, drift, errors due to an inaccurate amplification of the noise, etc.) that are not mitigated by our protocols. These unmitigated noise processes accumulate along the circuit, resulting in a bias that grows linearly in $m$ and that becomes non-negligible in deep circuits. This failed test shows us that when the noise processes that are not encompassed by our assumption become dominant, they have the potential to disrupt the performance of our EM protocols.

\begin{table}[t]
\footnotesize
    \centering
    \setlength{\extrarowheight}{4pt}
    \begin{tabular}{r||g|y|g}
    $\kappa$ & \textbf{NOX+REM} & \textbf{PEC+REM} & \:\textbf{REM}\: \\
    \hline
    \hline
   &&\\[-3.7ex]
   \textbf{0.2} & $0.6\%\textrm{ } (\uparrow 96.2\%)$ & $3.1\%\textrm{ } (\uparrow 80.6\%)$ & $16.1\%$ \\ 
   &&\\[-3.7ex]
   \textbf{0.4} & $3.8\%\textrm{ } (\uparrow 80.2\%)$ & $2.8\%\textrm{ } (\uparrow 85.6\%)$ & $19.2\%$ \\
   &&\\[-3.7ex]
    \textbf{0.6} & $2.8\%\textrm{ } (\uparrow 74.1\%)$ & $3.3\%\textrm{ } (\uparrow 69.4\%)$ & $10.8\%$ \\
   &&\\[-3.7ex]
    \textbf{0.8} & $1.3\%\textrm{ } (\uparrow 74.0\%)$ & $3.7\%\textrm{ } (\uparrow 26.0\%)$ & $5.0\%$ \\
\end{tabular}
    \caption{Values of $ \textrm{VD}_\textrm{QPE}^{(\kappa)}$ (Eq.~\ref{eq:vd_QPE}) calculated for the experiments with NOX+REM, PEC+REM and REM. The values in parenthesis correspond to the improvement over the corresponding REM value.}
    \label{tab:vd_QPE}
\end{table}

\subsection{Pseudo-random circuits}
In our third test we target pseudo-random circuits of varying depth of the type shown in Fig.~\ref{fig:random_circuit}. These circuits alternate between cycles containing either one or two $cZ$ gates and cycles containing random one-qubit gates. Fig.~\ref{fig:random} shows the variation distances between ideal and estimated probability distributions of the outputs obtained numerically~(Fig.~\ref{fig:random1}) and experimentally~(Fig.~\ref{fig:random2}). As in our previous tests, the mitigated outputs are visibly more accurate than the unmitigated ones, with average improvements between $32\%$ and $56\%$ in variation distance for the experimental outputs (Fig.~\ref{fig:improvements}).

\subsection{Relation between input the parameter $\sigma$ and the standard deviation of the estimators.}
In addition to suppressing the bias of the final estimators, our protocols provide guarantees about their statistical fluctuations. In particular, choosing a specific value for the input $\sigma$ guarantees $O(\sigma)$ standard deviation for every estimator, at the cost of running a number $N=O(\sigma^{-2})$ of circuits. To verify the relation between the input $\sigma$ and the standard deviation, we test numerically the performance of NOX on a two-qubit W-state circuit for different values of $\sigma$. As shown in Fig.~\ref{fig:sigma1}, smaller values of $\sigma$ lead to estimators that are statistically more accurate, which confirms the expected relation between $\sigma$ and the standard deviation under ideal experimental conditions. 

In a real-world setting, a number of uncontrollable factors (such as drift in the noise afflicting the  device in use) may inevitably prompt fluctuations in the estimators, limiting our ability to attain the desired statistical accuracy. To see how this may affect our protocols, we repeat our two-qubit test experimentally (Fig.~\ref{fig:sigma2}). We find that for $\sigma\gtrsim2\%$ the standard deviation of the results decreases with $\sigma$ as expected, whereas for $\sigma\lesssim2\%$ the standard deviation remains approximately constant. Due to the inherent fluctuations of the noise afflicting the device in use, implementing NOX with $\sigma<2\%$ requires running more circuits than with $\sigma=2\%$, but it does not improve the statistical accuracy of the estimators. In other words, our EM protocols cannot provide performance guarantees below the noise floor of the device. Overall, the results shown in Fig.~\ref{fig:sigma2} highlight the importance of the assumption A1 in the context of EM and call for methods to suppress drifts.

\section{Conclusions}
While fault-tolerance remains a long-term goal, understanding how to improve the performance of the existing noisy quantum computers is of timely importance. By leveraging cutting-edge protocols for noise reconstruction, we have developed PEC and NOX and experimentally tested their effectiveness and practicality on a four-qubit superconducting chip. The results of our tests demonstrate that both of our protocols can significantly enhance the performance of the noisy quantum circuits implemented on existing hardware platforms.

The previous EM protocols based on noise reconstruction are centered around GST \cite{TBG17,EBL18}. Since GST is inefficient in the number of qubits, these protocols have been tested on circuits containing up to two qubits \cite{SongEtAl19,ZhangEtAL20}. Implementation on larger circuits required enhancing the noise-reconstruction process with machine learning tools, at the price of increased complexity and runtime~\cite{SQCBY20}. On the contrary, being robust to all the main noise processes that naturally occur in multi-qubit systems, our protocols provide the tools to increase the performance of platforms with an arbitrary number of qubits, provided that they suffer moderate levels of noise.

Going forward, it is important to study how EM can help bridge the gap between today's noisy devices and tomorrow's fault-tolerant quantum computers (FTQC). Recent works showed how EM can reduce various types of logical errors in FTQCs, such as errors due to insufficient code distances \cite{PSBGT21} or imperfect magic-state distillation \cite{PSBGT21,SEFT22}. Due to their ability to suppress multi-qubit errors, we anticipate that our EM protocols may be helpful to suppress multi-qubit physical errors that have a higher weight than the code corrects, and consequently to reduce the errors at the logical level. We leave this point open for future works.\\

\noindent\textbf{Note added.} While editing the final version of this manuscript, we became aware of related work that also employs efficient methods for noise reconstruction to enhance error cancellation \cite{vanDenBergAndOthers22}. Our protocols have been developed independently and around the same time as that in Ref.~\cite{vanDenBergAndOthers22}.\\

\noindent\textbf{Acknowledgments.} This research was undertaken thanks in part to funding from the Canada First Research Excellence Fund, the Government of Ontario, and the Government of Canada through NSERC. This work was supported by the U.S.~Department of Energy, Office of Science, Office of Advanced Scientific Computing Research Quantum Testbed Program under Contract No.~DE-AC02-05CH11231. 

S.F.~and A.H.~contributed equally to this work. S.F.~and J.J.W.~conceptualized the work. A.H., J.-L.V., and A.M.~conducted the experiments. R.N., D.I.S., and I.S.~supervised all experimental work. S.F., A.H.~and J.J.W.~wrote the manuscript, with input from all coauthors. S.F.~thanks J.~Skanes-Norman for useful discussions.

\section{Appendix}
\subsection{Further details about noise reconstruction.}~\label{app:KNR}
In this Appendix we discuss the noise-reconstruction data obtained in our experiments. We begin with the CER data, then we discuss the readout calibration (RCAL) data.\\

\noindent\textbf{CER data.} Fig.~\ref{fig:KNR} shows the data obtained in two different implementations of CER. We note that in both figures, the errors afflicting the idling qubits dominate the probability distributions. This is due to the nature of the $cZ$ gates employed in this work, which utilized off-resonant drives to implement tunable $ZZ$ interaction between two fixed-frequency transmon qubits \cite{MitchellAndOthers21}. These off-resonant drives can induce phase errors on spectator qubits if the driving tones are far off-resonant, or induce partial Rabi driving on spectator qubits (i.e.~$X$- or $Y$-type errors) if the driving tones are near-resonant. The $cZ$ gates also contain signals (with equal amplitude but opposite phase) which attempt to null any conditional errors on the neighboring spectator qubits, but if the nulling of these crosstalk terms is imperfect (as we see in Fig.~\ref{fig:KNR}), then errors on the idling qubit can dominate the CER results. 

We do not observe significant fluctuations of the cycles' error rates over periods of several hours. Thus, to minimize the runtime, we avoid taking new CER data in between different repetitions of the same circuit. See the Supplemental Material for Ref.~\cite{Hashim20} for further details about CER and errors on this device.\\

\noindent\textbf{RCAL data.} To obtain the RCAL data, we implement circuits of two different types. Firstly, we implement circuits with an identity on every qubit. Secondly, circuits with a Pauli-$X$ on every qubit. We use the relative frequency of the outputs 0 and 1 on every qubit to estimate state-dependent measurement errors. Fig. \ref{fig:rcal_fit} shows RCAL data taken on September 7, 2021. As it can be seen, the probability that an outcome $0$ is flipped to $1$ is below $1\%$ for every qubit, while the probability that an outcome 1 is flipped to a 0 is around $2\%$ on average (Fig.~\ref{fig:rcal_fit}).

We collect the RCAL data at the beginning of each experiment, and we avoid taking new RCAL data while running the experiment in order to minimise the runtime. We use the RCAL data to mitigate the readout noise via REM~\cite{BSKMG20}. By applying REM we observe a visible improvement of the results for W-state (Fig.s \ref{fig:w_states_app}). On the contrary, when we apply REM to circuits with a higher two-qubit gate count (Fig.s \ref{fig:qpe_app} and \ref{fig:random_app}), we obtain outputs that are equal to the unmitigated ones within error bars. This remarks the fact that mitigating both cycles' and readout errors can be significantly more beneficial than mitigating readout errors alone, especially for circuits that contain a large number of two-qubit gates.

\onecolumngrid

\begin{figure}[H]
     \centering
      \subfloat[][\small  Numerical testing, measured estimators for the most frequent outputs.]{\includegraphics[clip,width=0.95\columnwidth]{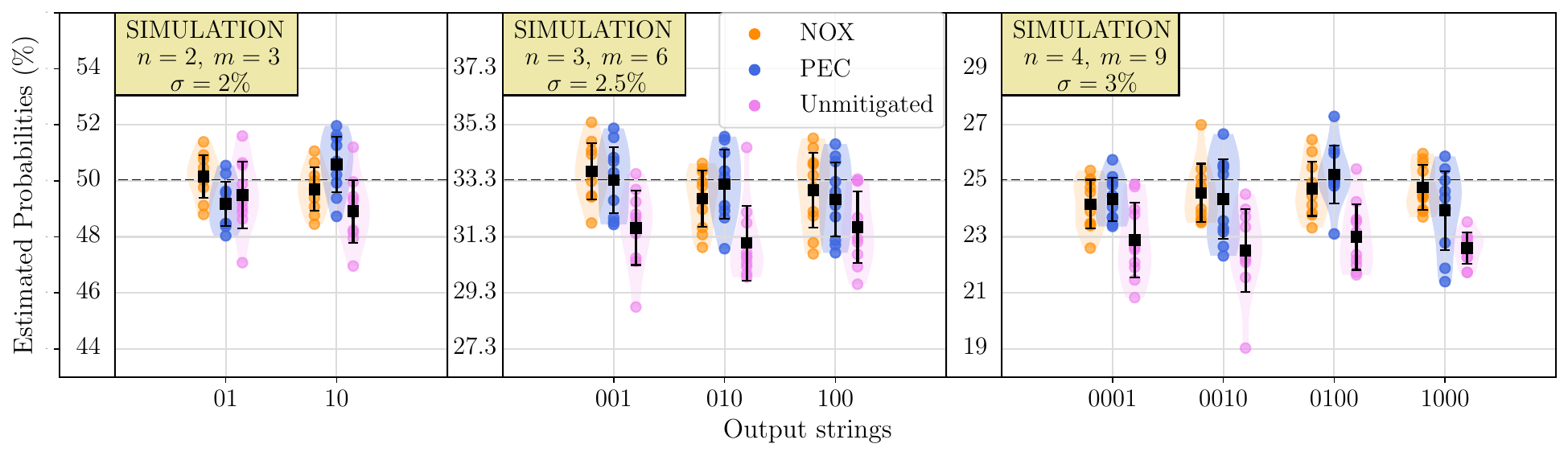}\label{fig:w1}}
     \qquad
     \qquad
      \subfloat[][\small  Experimental testing, measured estimators for the most frequent outputs.]{\includegraphics[clip,width=0.95\columnwidth]{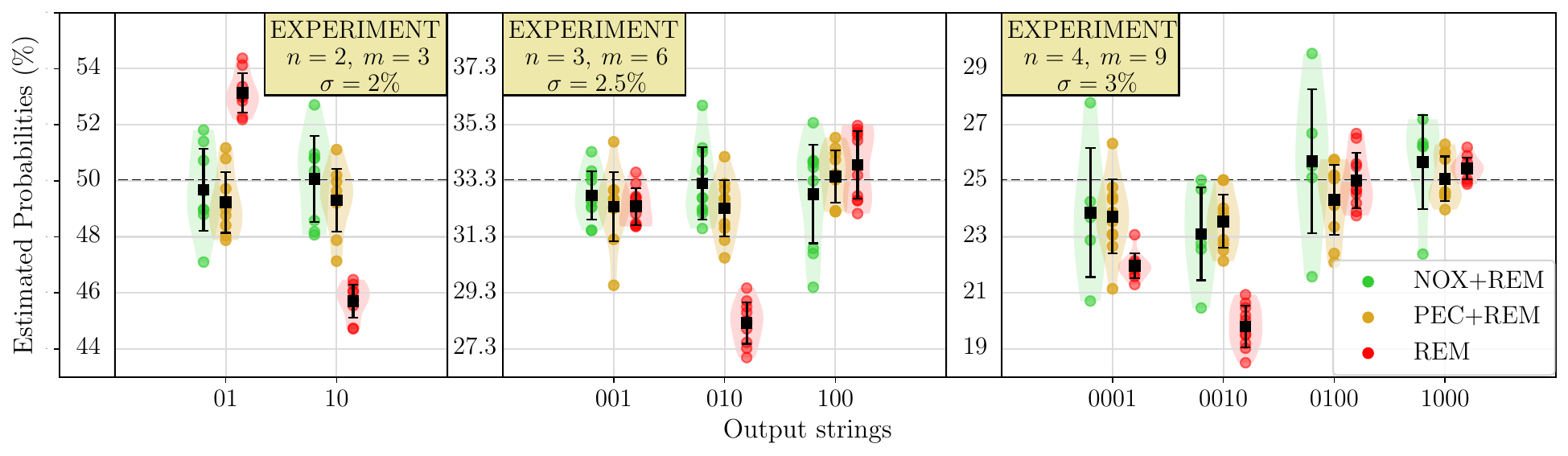}\label{fig:w2}}
     
      \subfloat[][\small Numerical testing, variation distances.]{\includegraphics[clip,width=0.45\columnwidth]{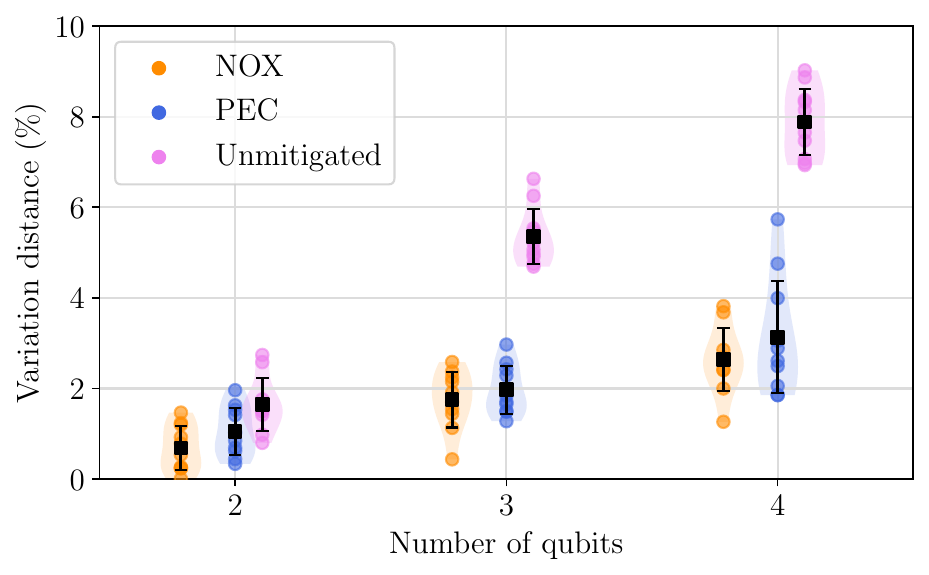}\label{fig:w3}}
      \subfloat[][\small Experimental testing, variation distances.]{\includegraphics[clip,width=0.45\columnwidth]{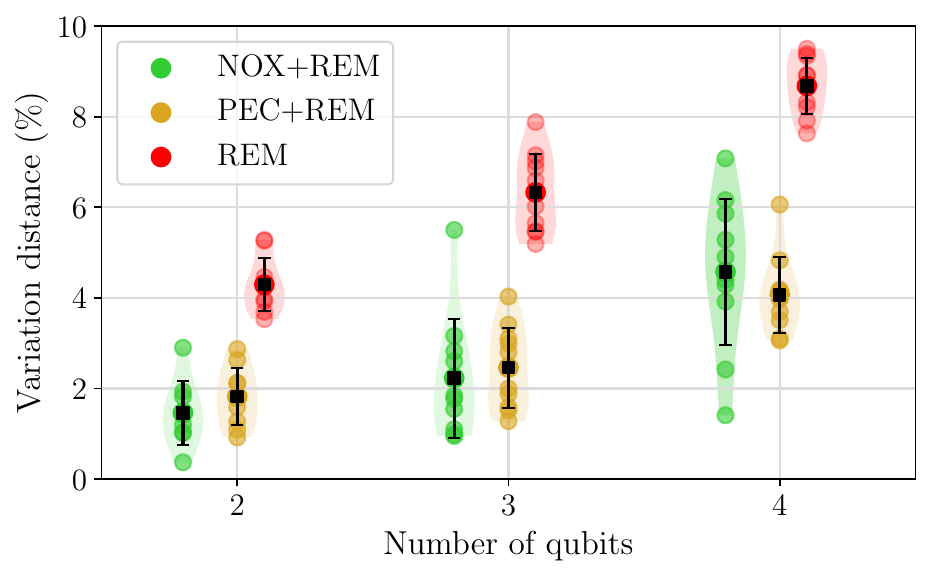}\label{fig:w4}}
     \caption{Summary of the results obtained for the W-state circuits. Figs.~(a) and (b) show the estimated probabilities for the most frequent outputs obtained in the simulations and experiments, respectively.
     Figs.~(c) and (d) show the variation distances between ideal and estimated probability distributions. In every figure the dots correspond to the actual data, the squares represent their means, the bars their standard deviations, and the dashed lines their ideal values. Note that for every $n$, the numerical estimates for the unmitigated circuit in (a) concentrate around similar values, which are close to the ideal value for $n=2$ and below the ideal value for $n=3,4$. Instead, the corresponding experimental estimates in~(b) are subject to larger fluctuations, and in some cases they are above the ideal value (see, for example, the estimates for the output 01). This is due to the presence of coherent errors in the experimental implementation of the unmitigated circuit. These coherent errors are tailored by Randomized Compiling and reported by CER in the form of stochastic Pauli errors. As a result, our simulations (which model the noise based on CER data) do not capture the full impact of these coherent errors.}
     \label{fig:w_states}
\end{figure}

\begin{figure}[H]
     \centering
      \subfloat[][\small Numerical testing with $t=2$ ancillae.]{\includegraphics[clip,width=0.45\columnwidth]{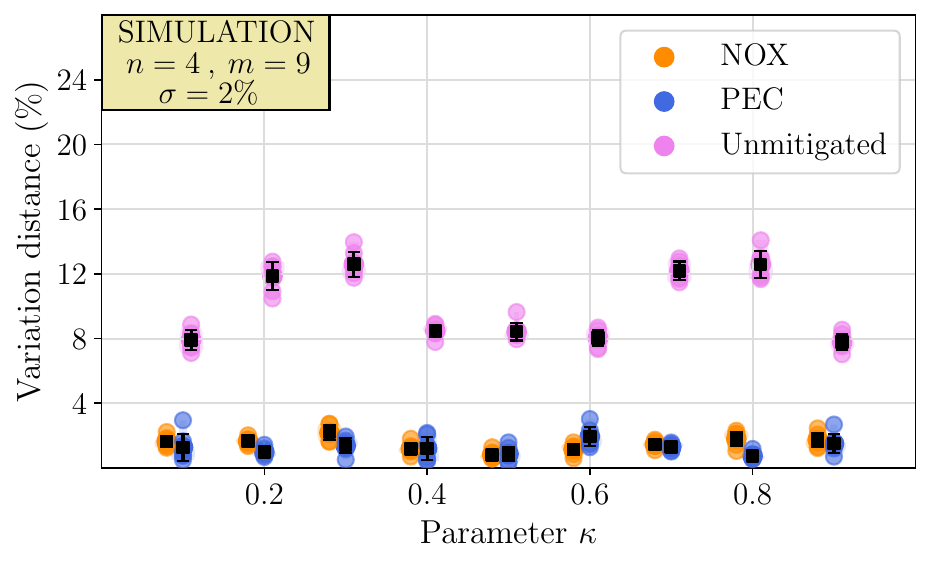}\label{fig:QPE1}}
     \qquad
     \qquad
      \subfloat[][\small Experimental testing with $t=2$ ancillae.]{\includegraphics[clip,width=0.45\columnwidth]{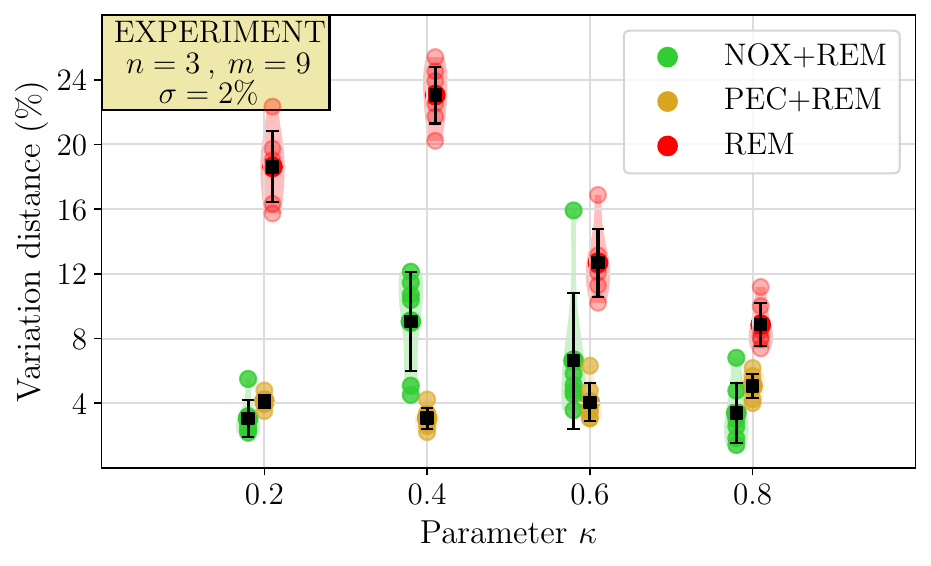}\label{fig:QPE2}}
      
     \subfloat[][\small Estimated probability of the parameters $\kappa_\textrm{est}$ returned by the QPE algorithm in the various experiments with $t=2$ ancillae.]{\includegraphics[clip,width=1\columnwidth]{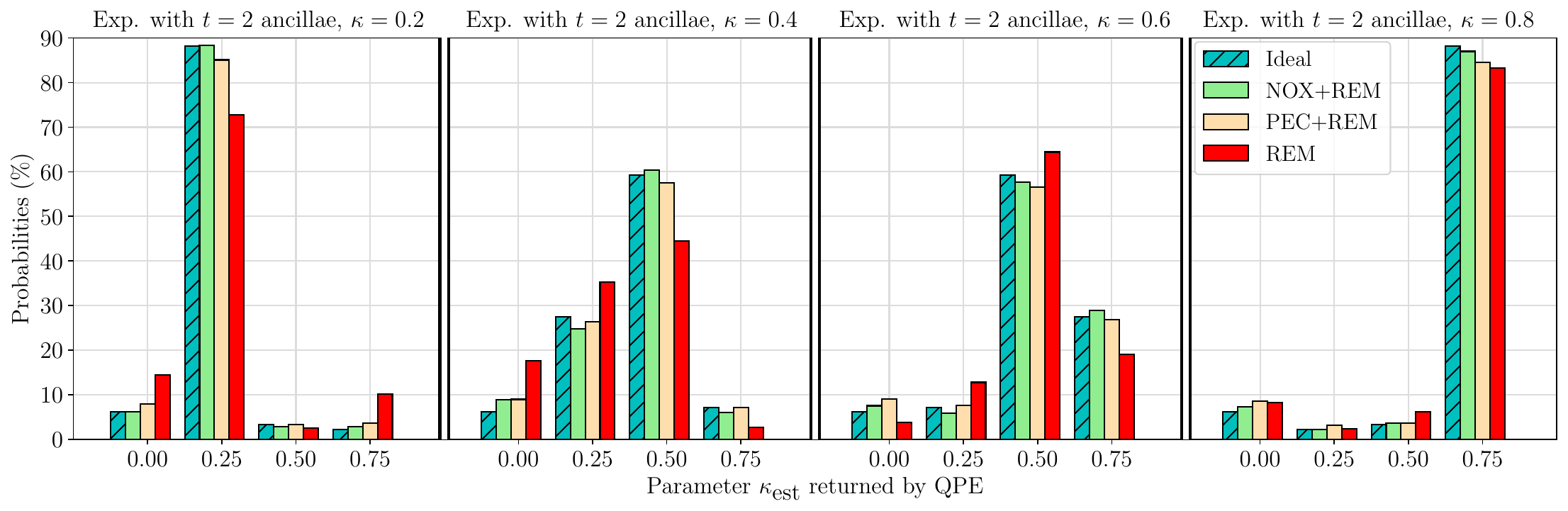}\label{fig:QPE_kappas_3qb}}
      
     \subfloat[][\small Estimated probability of the parameters $\kappa_\textrm{est}$ returned by the QPE algorithm with $t=3$ ancillae and $\kappa=0.5$.]{\includegraphics[clip,width=1\columnwidth]{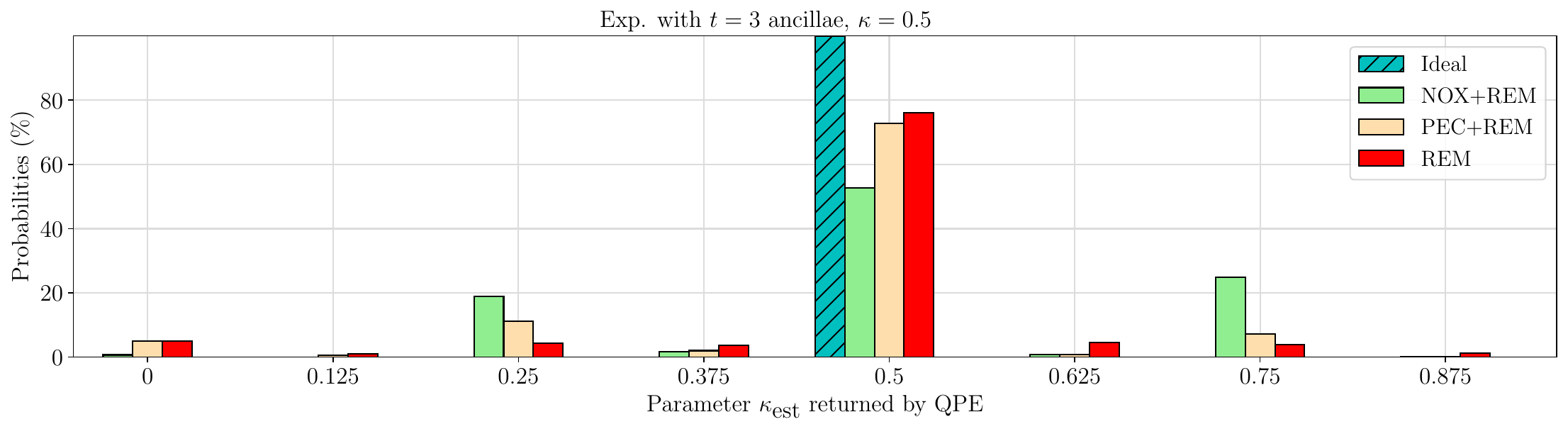}\label{fig:QPE_kappas_4qb}}
     \caption{\small Summary of the results for the QPE circuits. Figs.~(a) and (b) show the variation distances between ideal and estimated probability distributions of the outputs obtained in the simulations and in the experiments for our QPE experiments with $t=2$ ancillae. The dots correspond to the actual data, the squares represent their means and the bars their standard deviations. Fig.~(c) shows the estimated probabilities that the QPE algorithm with $t=2$ ancillae returns a given parameter $\widehat{\kappa}_\textrm{est}$|the striped bars are calculated with noiseless simulations, the solid bars are calculated by averaging over the experimental outputs. Fig.~(d) shows the estimated probabilities that the QPE algorithm with $t=3$ returns  a given parameter $\widehat{\kappa}_\textrm{est}$.}
     \label{fig:QPE}
\end{figure}

\begin{figure}[H]
     \centering
      \subfloat[][\small Numerical testing with random circuits.]{\includegraphics[clip,width=0.45\columnwidth]{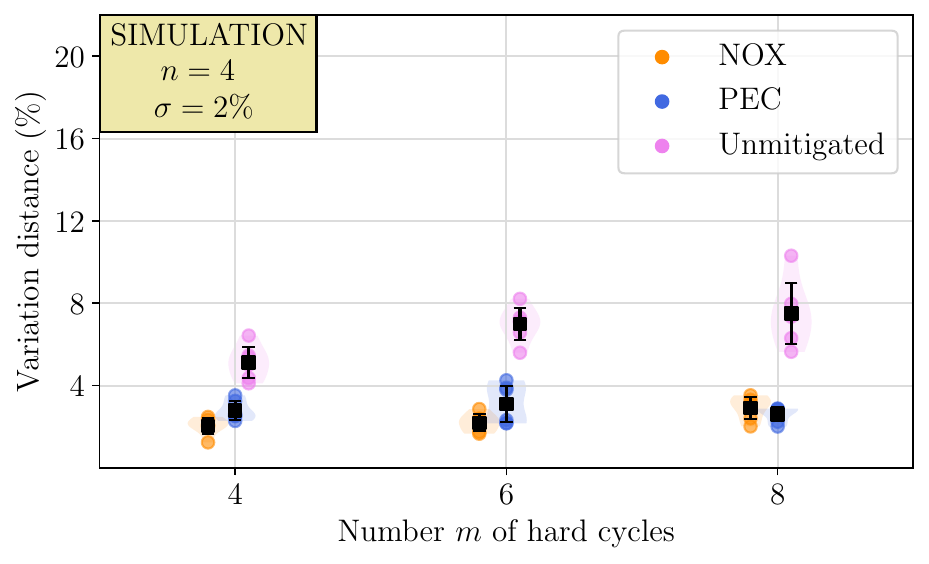}\label{fig:random1}}
     \qquad
     \qquad
      \subfloat[][\small Experimental testing with random circuits.]{\includegraphics[clip,width=0.45\columnwidth]{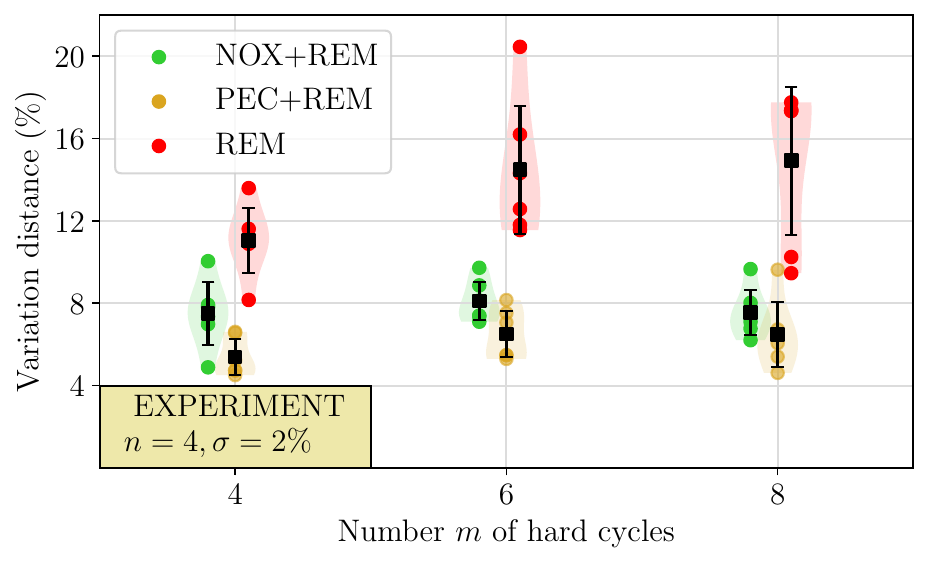}\label{fig:random2}}
     \caption{\small Summary of the results for the pseudo-random circuits. Figs.~(a) and (b) show the variation distances between the ideal and estimated probability distributions of the outputs obtained in the simulations and experiments, respectively. The dots correspond to the actual data, the squares represent their means and the bars their standard deviations.}
     \label{fig:random}
\end{figure}

\begin{figure}[H]
     \centering
      \subfloat[][\small Numerical testing.]{\includegraphics[clip,width=0.45\columnwidth]{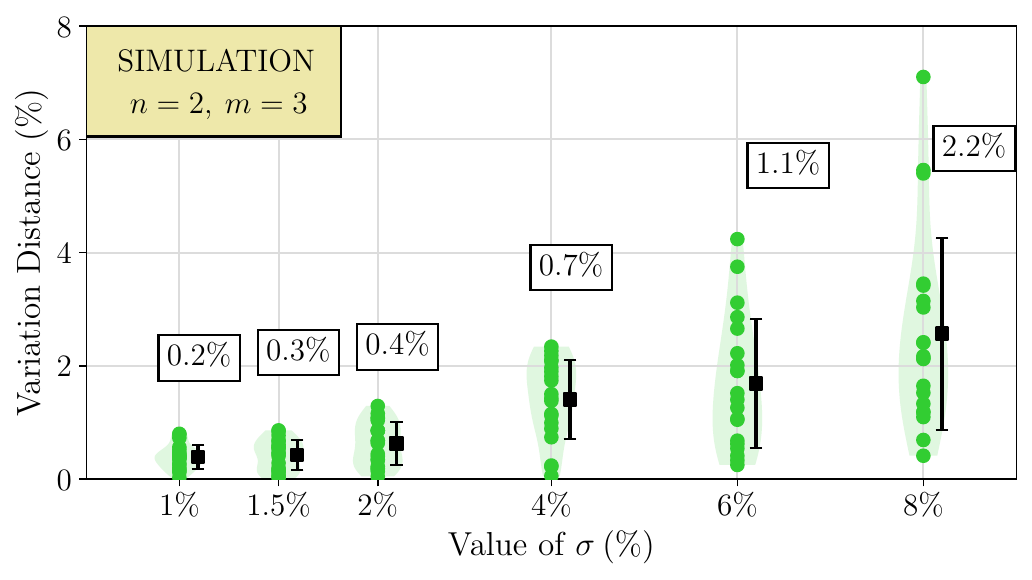}\label{fig:sigma1}}
     \qquad
     \qquad
      \subfloat[][\small Experimental testing.]{\includegraphics[clip,width=0.45\columnwidth]{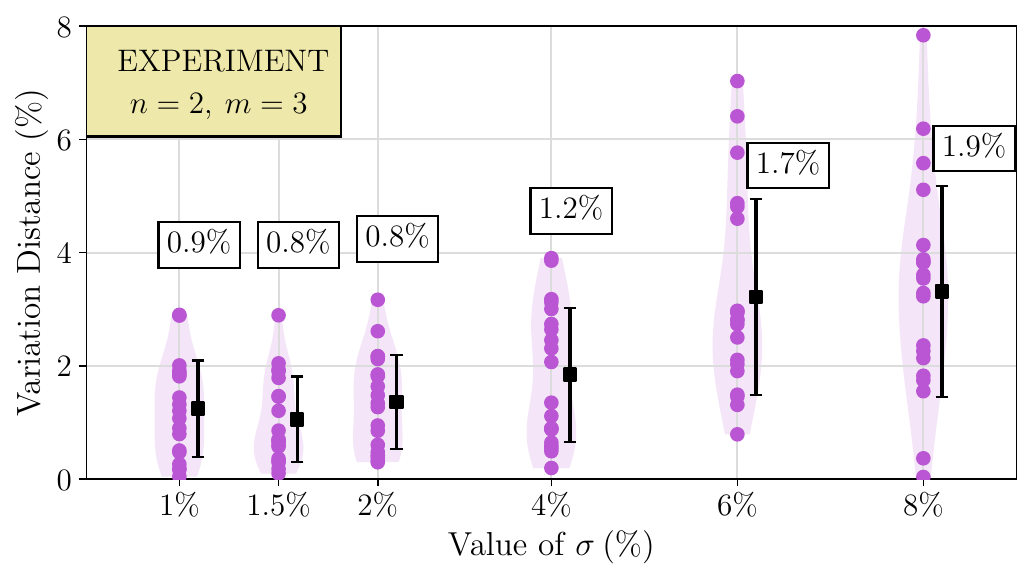}\label{fig:sigma2}}
     \caption{\small Summary of the results obtained by applying NOX on a two-qubit W-state circuit at different values of $\sigma$. Figs.~(a) and (b) show the variation distances between ideal and estimated probability distributions of the outputs obtained numerically and experimentally, respectively. The dots correspond to the actual data, the squares represent their means and the bars their standard deviations (which are reported in detail in the white boxes).}
     \label{fig:sigma}
\end{figure}

\begin{figure}[H]
    \centering
    \includegraphics[width=3.cm]{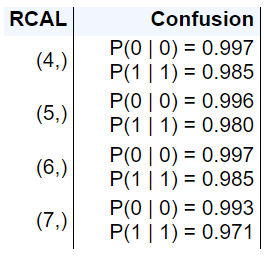}
    \caption{Readout calibration (RCAL) estimates obtained on September 7, 2021 by running around 10,000 calibration circuits. The l.h.s. column contains qubit labels, the r.h.s. column contains estimates of the probabilities of no error|specifically, for $i\in\{0,1\}$, P$(i|i)$ represents the conditional probability that a measurement returns $i$, given that $i$ is expected outcome.}
    \label{fig:rcal_fit}
\end{figure}

\newpage
\begin{figure}[H]
     \centering
      \subfloat[][CER data taken on June 16, 2021.]{\includegraphics[clip,width=0.65\columnwidth]{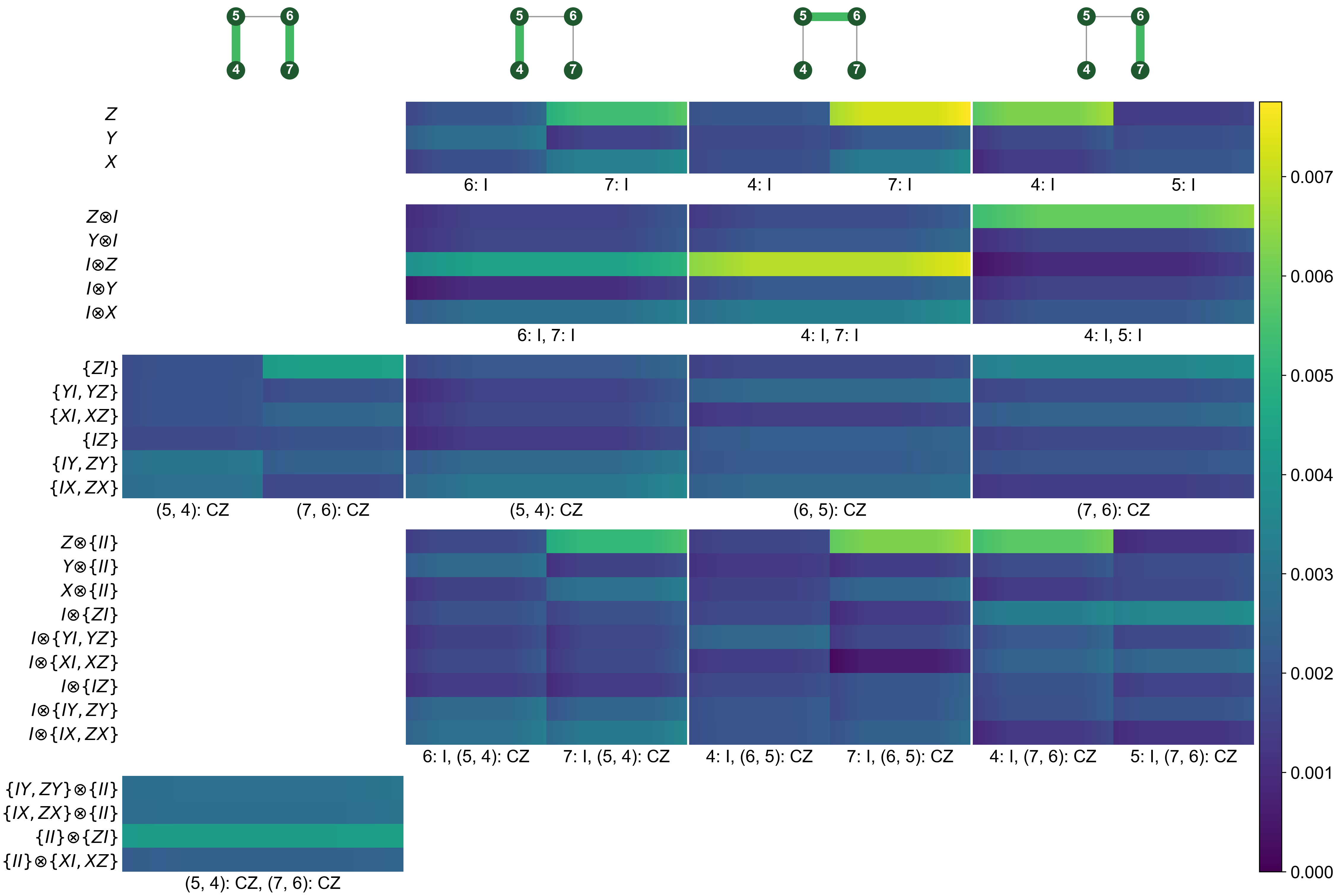}\label{fig:KNR_june}}
      
      \qquad
      \qquad
      
      \subfloat[][CER data taken on September 7, 2021.]{\includegraphics[clip,width=0.65\columnwidth]{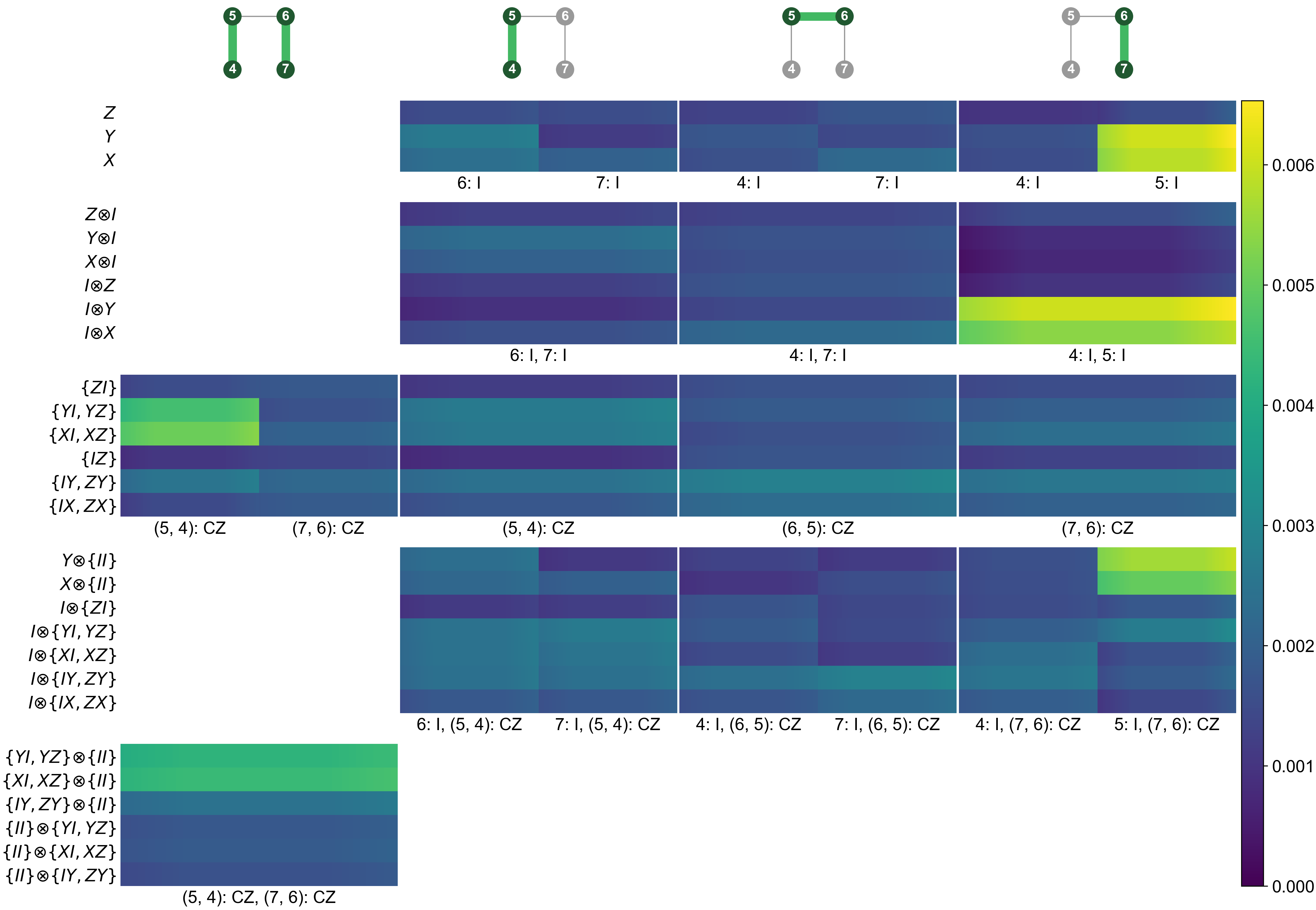}\label{fig:KNR_sept}}
     \caption{\small A map of the Pauli error rates (Eq.~\ref{eq:pauli_error_rates}) for different four-qubit cycles performed by the chip in Fig.~\ref{fig:chip}, each containing either one or two $cZ$ gates. The errors on the l.h.s. correspond to the Pauli errors afflicting the cycles, the colormap indicates the estimated probabilities for the Pauli errors in the plot, and the gradient across each cell defines the 90\% confidence interval of each estimate. All the errors with negligible probabilities are truncated and are not displayed. The first row of subplots shows the weight-one errors acting on the idling qubits. The second row shows the weight-one and weight-two errors on the idling qubits. The third row shows the weight-one and weight-two errors on the non-idling qubits. The fourth and fifth rows show correlated errors afflicting more than two qubits.}
     \label{fig:KNR}
\end{figure}

\newpage
\begin{figure}[!t]
     \centering
      \subfloat[][W-state circuits.]{\includegraphics[clip,width=8cm]{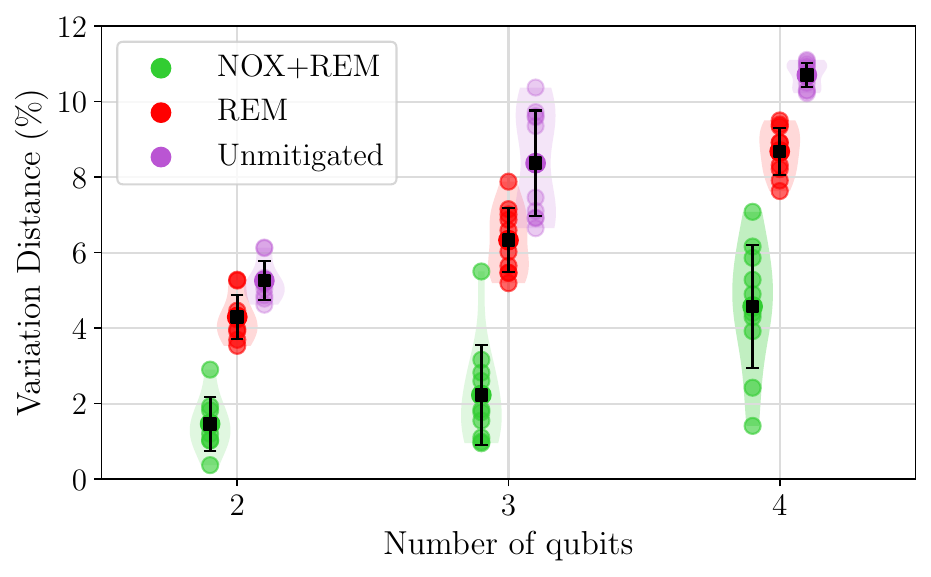}\label{fig:w_states_app}}
      \qquad
      \subfloat[][Three-qubit QPE circuits.]{\includegraphics[clip,width=8cm]{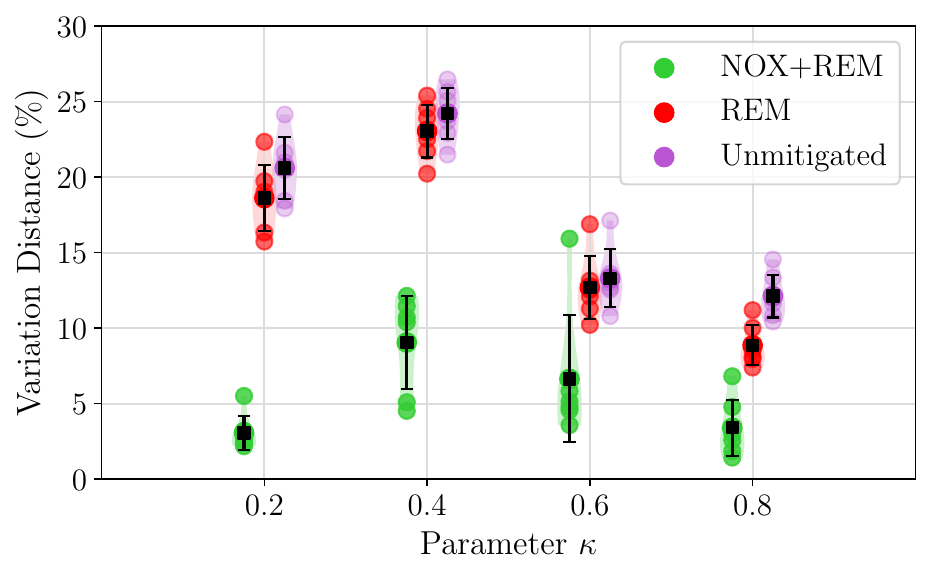}\label{fig:qpe_app}}
      
      \qquad
      \qquad
      
      \subfloat[][Random circuits.]{\includegraphics[clip,width=8cm]{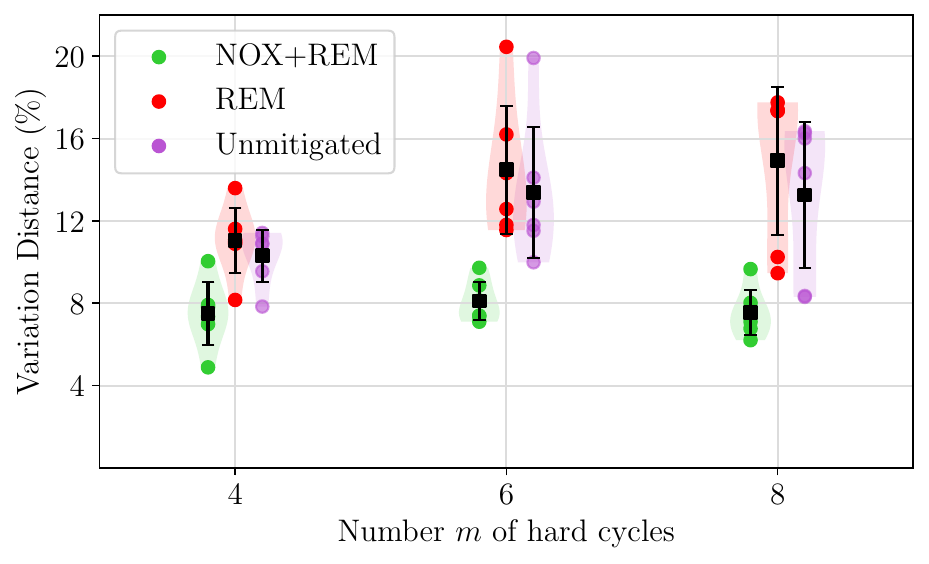}\label{fig:random_app}}
    \caption{Summary of the results obtained in the various experiments by applying NOX+REM, by applying REM alone and by applying no mitigation. As it can be seen, REM improved the outputs of our W-state (Fig. a), but it did not significantly improve the outputs of our QPE and random circuits (Fig.s b, c).}
    \label{fig:results_app}
\end{figure}

\twocolumngrid

\bibliographystyle{plainnat}
\bibliography{biblio}

\onecolumngrid
\newpage

\begin{center}
\textbf{\large Supplementary Material}\\
\end{center}

\begin{center}
\textbf{I. Formal description of PEC and proof of theorem \ref{th:PEC}}
\end{center}

In this section we provide a formal description of our PEC protocol, then we show a proof of theorem~\ref{th:PEC}. The box below shows the algorithm implemented by PEC:

\vspace{0.6cm}

\begin{center}
\fbox{
    \parbox{0.8\textwidth}{
\begin{small}
\textbf{Box \hypertarget{PEC}{1}.} Pauli Error Cancellation (PEC)\\
\noindent\rule{0.8\textwidth}{0.4pt}\\
\textbf{Inputs.} \\
An $n$-qubit quantum circuit $\C=\E_{m+1}\h_{m}\E_{m}\cdots\h_1\E_1$, the Pauli error rates $\{\epsilon_l^{(\h_j)}\}$ of the  noisy cycles $\h_j$, an $n$-qubit state $\rho_\textup{in}$, an operator $O$ such that the spectral norm $||O||_\infty\sim 1$ and a number $\sigma\in(0,1)$.
\sbline
\textbf{Routine.}\\
\textbf{1.} Calculate $C_{\textup{tot}}$ (Eq.~\ref{eq:cost}) and $N=(C_\textup{tot}/\sigma)^2$.
\sbline
\textbf{2. }For $k\in\{1,\ldots,N\}$:
\begin{itemize}[leftmargin=1.2cm]
    \item[\textbf{2.1 }]For $j\in\{1,\ldots,m\}$:
    \item[] Choose $\p^{(k)}_{j}\in\{\I,\X,\Y,\Z\}^{\otimes n}$ at random with probability $\epsilon_{l_j}^{(\h_j)}$. Next, initialise the circuit $\C^{\textup{(PEC)}}(\p^{(k)}_1,\ldots,\p^{(k)}_m)$ as in Eq.~\ref{eq:circuit_PEC}.
    \item[\textbf{2.2}] If an even number of Pauli operators $\p^{(k)}_1,\ldots,\p^{(k)}_m$ is equal to the identity, initialise $s_k=1$. Otherwise, initialise $s_k=-1$.
\end{itemize}

\textbf{3. }For $k\in\{1,\ldots,N\}$:
\begin{itemize}[leftmargin=0.5cm]
\item[]Apply  Randomized Compiling to $\C^{\textup{(PEC)}}(\p^{(k)}_1,\ldots,\p^{(k)}_m)$. Apply the resulting circuit to the state $\rho_\textup{in}$, measure $O$ and save the result as $r_k$.
\end{itemize}

\textbf{Outputs.} \\
The number $\widehat{E}_\textup{PEC}(O)=C_\textup{tot}\sum\limits_{k=1}^N{s_kr_k}/{N}$.
\end{small}
}}
\end{center}

\vspace{0.6cm}

We can now prove theorem \ref{th:PEC}.

\begin{proof}\textit{(Theorem \ref{th:PEC}). }To prove that $\widehat{E}_\textup{PEC}(O)$ has standard deviation $O(\sigma)$ we use the same arguments as in \cite{TBG17}. Specifically, we begin by noting that each measurement outcome $r_k$ is an estimator of $\widehat{E}(O)$ with standard deviation $O(1)$, hence $C_\textup{tot}s_kr_k$ has standard deviation $O(C_\textup{tot})$. Therefore, $\widehat{E}_\textup{PEC}(O)$ has standard deviation $O(C_\textup{tot}/N^{1/2})$, which is $O(\sigma)$ for $N=(C_\textup{tot}/\sigma)^2$.

To calculate the bias of $\widehat{E}_\textup{PEC}(O)$, for simplicity we begin by considering the case $m=1$ and we generalize afterwards. For $m=1$, an ideal implementation of the input circuit performs the operation $\C=\E_2\h_1\E_1$. Averaging over all the random Pauli operators appended to the input circuit (both in step 2.1 for error cancellation and in step 3 for Randomized Compiling), the number $\widehat{E}_\textup{PEC}(O)$ equals
\begin{equation}
\label{eq:estimator_m=1_1}
    \widehat{E}_\textup{PEC}(O) = C_\textup{tot} \sum_{l=0}^{4^n-1}s_l\epsilon_l^{(\h_1)}\textup{Tr}\big[O\E_2\p_l\D_{\h_1}\h_1\E_1(\rho_\textup{in})\big]\:,
\end{equation}
where $\D_{\h_1}(\cdot)=\sum_{t=0}^{4^n-1}\epsilon_t^{(\h_1)}P_t(\cdot)P_t$ is a Pauli channel with error rates $\epsilon_l^{(\h_1)}$, $s_0=1$ and $s_l=-1$ for all $l\neq0$. The r.h.s. of the equation above can be rewritten as
\begin{align}
\label{eq:estimator_m=1_2}
\widehat{E}_\textup{PEC}(O) =& C_\textup{tot} \sum_{l,t=0}^{4^n-1}s_l\epsilon_l^{(\h_1)}\epsilon_t^{(\h_1)}\textup{Tr}\big[O\E_2\p_l\p_t\h_1\E_1(\rho_\textup{in})\big]\\
=&C_\textup{tot}\bigg(\sum_{l=0}^{4^n-1}s_l\big(\epsilon_l^{(\h_1)}\big)^2\:\textup{Tr}\big[O\E_2\h_1\E_1(\rho_\textup{in})\big]
+
\sum_{l,t\neq l}s_l\epsilon_l^{(\h_1)}\epsilon_t^{(\h_1)}\textup{Tr}\big[O\E_2\p_l\p_t\h_1\E_1(\rho_\textup{in})\big]
\bigg)\\
=&\textup{Tr}\big[O\C(\rho_\textup{in})\big]+
C_\textup{tot}\sum_{l,t\neq l}s_l\epsilon_l^{(\h_1)}\epsilon_t^{(\h_1)}\textup{Tr}\big[O\E_2\p_l\p_t\h_1\E_1(\rho_\textup{in})\big]
\:,
\end{align}
where in the last line we used $\sum_{l=0}^{4^n-1}
s_{l}\big(\epsilon_{l}^{(\h_1)}\big)^2=C^{-1}_\textup{tot}$. Thus,
\begin{align}
\delta_\textup{PEC}=&\bigg|\widehat{E}_\textup{PEC}(O)-\textup{Tr}\big[O\C(\rho_\textup{in})\big]\bigg|\\
=&C_\textup{tot}\bigg|\sum_{l,t\neq l}s_l\:\epsilon_l^{(\h_1)}\epsilon_t^{(\h_1)}\textup{Tr}\big[O\E_2\p_l\p_t\h_1\E_1(\rho_\textup{in})\big]\bigg|\\
=&C_\textup{tot}\bigg|\sum_{l\neq 0, t\neq0,l}\epsilon_l^{(\h_1)}\epsilon_t^{(\h_1)}\textup{Tr}\big[O\E_2\p_l\p_t\h_1\E_1(\rho_\textup{in})\big]
\bigg| \\
\leq & C_\textup{tot}\sum_{l\neq 0, t\neq0,l}\epsilon_l^{(\h_1)}\epsilon_t^{(\h_1)}\|O\|_\infty\|\E_2\p_l\p_t\h_1\E_1(\rho_\textup{in})\|_1 \tag{Triangle and H\"older's inequalities} \\
=&O\bigg(C_\textup{tot}\sum_{l\neq 0, t\neq0,l}\epsilon_l^{(\h_1)}\epsilon_t^{(\h_1)}\bigg)\:.
\end{align}
Using
\begin{equation}
\label{eq:epsilons}
\sum_{\substack{l\neq0\\t\neq 0,l}}\epsilon_l^{(\h_1)}\epsilon_t^{(\h_1)}
\leq
\sum_{\substack{l\neq0\\t\neq 0}}\epsilon_l^{(\h_1)}\epsilon_t^{(\h_1)}
=
\bigg(\sum_{\substack{l\neq0}}\epsilon_l^{(\h_1)}\bigg)^2
=
\big(1-\epsilon_{0}^{(\h_1)}\big)^2\:,
\end{equation}
we find $\delta_\textup{PEC}=O\big(C_\textup{tot}\big(1-\epsilon_{0}^{(\h_1)}\big)^2\big)$, which proves the theorem for $m=1$.

The generalisation to $m>1$ follows easily by linearity. Averaged over all the random Pauli operators, for $m>1$ the quantity $\widehat{E}_\textup{PEC}(O)$ returned by PEC equals
\begin{equation}
\label{eq:estimator_m>1_2}
\widehat{E}_\textup{PEC}(O)
=
C_\textup{tot}\sum_{l_1,\ldots,l_m}
s_{l_1}\cdots s_{l_m}
\epsilon_{l_1}^{(\h_1)}\cdots
\epsilon_{l_m}^{(\h_m)}\:
\textup{Tr}\bigg[O\E_{m+1}\p_{l_m}\D_{\h_m}\h_m\E_m\cdots\p_{l_1}\D_{\h_1}\h_1\E_1(\rho_\textup{in})\bigg]\:,
\end{equation}
where $s_{l_j}=1$ if $l_j=0$ and $s_{l_j}=-1$ otherwise. Following the same arguments as for $m=1$ we find
\begin{align}
    \delta_\textup{PEC}=&\bigg|\widehat{E}_\textup{PEC}(O)-\textup{Tr}\big[O\C(\rho_\textup{in})\big]\bigg|\\
    \approx&
    \bigg|
    \sum_{j=1}^mC_{\h_j}\sum_{l_j\neq0, t_j\neq0, l_j}\epsilon_{l_j}^{(\h_j)}\epsilon_{t_j}^{(\h_j)}
    \textup{Tr}\bigg[O\E_{m+1}\h_m\E_m\cdots\p_{l_j}\p_{t_j}\h_j\E_j\cdots\h_1\E_1(\rho_\textup{in})\bigg]
    \bigg|\:,
\end{align}
where we omitted terms that are of higher order in the Pauli error rates and we defined $C_{\h_j}=1/\big(\sum_{l_j}s_{l_j}\big(\epsilon_{l_j}^{(\h_j)}\big)^2\big)$. Using $C_{\h_j}\leq C_{\textup{tot}}$, Eq.~\ref{eq:epsilons} and the triangle inequality we finally find $\delta_\textup{PEC}=O\big(C_\textup{tot}\sum_{j=1}^{m}\big(1-\epsilon_{0}^{(\h_j)}\big)^2\big)$.
\end{proof}

\newpage
\begin{center}
\textbf{II. Formal description of NOX and proof of theorem 2}
\end{center}
We now provide a formal description of our NOX protocol, then we show a proof of theorem \ref{th:ce}. The formal description of NOX is given in the following box:

\vspace{0.6cm}

\begin{center}
\fbox{
    \parbox{0.8\textwidth}{
\begin{small}
\textbf{Box \hypertarget{CE}{2}.} Noiseless Output Extrapolation (NOX).\\
\noindent\rule{0.8\textwidth}{0.4pt}\\
\textbf{Inputs.} \\
An $n$-qubit quantum circuit $\C=\E_{m+1}\h_{m}\E_{m}\cdots\h_1\E_1$, an $n$-qubit state $\rho_\textup{in}$, an operator $O$ such that the spectral norm $||O||_\infty\sim 1$, a number $\sigma\in(0,1)$, an integer $\alpha>1$ and a Boolean $\textsf{id\_insert}\in\{\textsf{True, False}\}$.
\sbline
\textbf{Routine.}\\
\textbf{1.} Initialise the number $N=m^2/(\alpha-1)^2\sigma^2$.
\sbline
\textbf{2. }If \textsf{id\_insert} is \textsf{True}: 
\begin{itemize}
    \item[\textbf{2.1}] Apply the circuit $\C$ to $\rho_\textup{in}$ a total of $N$ times with Randomized Compiling. Estimate the expectation value $\widetilde{E}_\textup{in}(O)$ of $O$ at the end of the circuit.
    \item[\textbf{2.2}] Initialise a circuit $\C'_{\h_j,\alpha}$ by replacing $\h_j$ with $\h_j(\h_j\h_j^{-1})^{\alpha-1}$ in $\C$. Apply the circuit $\C'_{\h_j,\alpha}$ to $\rho_\textup{in}$ a total of $N$ times with Randomized Compiling. Estimate the expectation value $\widetilde{E}_{\h_j,\alpha}(O)$ of $O$ at the end of the circuit.
\end{itemize}
\textbf{\textcolor{white}{2. }}Else:
\begin{itemize}
    \item[\textbf{2.1}] Apply the circuit $\C$ to $\rho_\textup{in}$ a total of $N$ times with Randomized Compiling. Estimate the expectation value $\widetilde{E}_\textup{in}(O)$ of $O$ at the end of the circuit.
    \item[\textbf{2.2}] Initialise a circuit $\C'_{\h_j,\alpha}$ as in Eq.~\ref{eq:append_errors}. Apply the circuit $\C'_{\h_j,\alpha}$ to $\rho_\textup{in}$ a total of $N$ times with Randomized Compiling. Estimate the expectation value $\widetilde{E}_{\h_j,\alpha}(O)$ of $O$ at the end of the circuit.
\end{itemize}

\textbf{Outputs.} \\
The quantity $\widehat{E}_\textup{NOX}(O)=\widetilde{E}_\textup{in}(O)\frac{\alpha-1+m}{\alpha-1}-\sum_{j=1}^{m}\frac{\widetilde{E}_{\h_j,\alpha}(O)}{\alpha-1}$.
\end{small}
}}
\end{center}

We now prove theorem \ref{th:ce}.

\begin{proof}\textit{(Theorem \ref{th:ce}).}
We begin by proving that $\widehat{E}_\textup{NOX}(O)$ has standard deviation $O(\sigma)$. Since all the noisy estimators $\widetilde{E}_\textup{in}(O)$ and $\widetilde{E}_{\h_j,\alpha}(O)$ are calculated independently by running each circuit $N=m^2/(\alpha-1)^2\sigma^2$ times, they all have the same standard deviation $\sigma=O(N^{-1/2})$. Using the formula for propagation of errors, we find that the standard deviation of $\widehat{E}_\textup{NOX}(O)$ is
\begin{align}
O\bigg(\sqrt{\frac{1}{N}}
\sqrt{
\bigg(\frac{\alpha-1+m}{\alpha-1}\bigg)^2+
m\bigg(\frac{1}{\alpha-1}\bigg)^2
}\bigg)
=
O\bigg(\sqrt{\frac{(\alpha-1)^2\sigma^2}{m^2}}
\sqrt{\frac{m^2}{(\alpha-1)^2}}
\bigg)=O(\sigma)\:.
\end{align}
To calculate the bias, let us first define the maps $\delta_{\h_j}=\D_{\h_j}-\I$ for $j\in\{1,\ldots,m\}$. Note that $\delta_{\h_j}=(1-\epsilon_0^{(\h_j)})(\Q_j-\I)$ for some Pauli channel $\Q_j$. Hence, if the probability of error $1-\epsilon_0^{(\h_j)}$ is sufficiently small (as is the case for the  noisy cycles in state-of-the-art platforms) we have $||\delta_{\h_j}^2||_\diamond\ll||\delta_{\h_j}||_\diamond$, where $||\:\cdot\:||_\diamond$ represents the diamond norm. 

Defining $\delta_{\h_j}$ allows us to write $\widetilde{E}_\textup{in}(O)-E(O)$ as (see lemma \ref{lem:telescopic} for a proof)
\begin{align}
\label{eq:ce1}
\widetilde{E}_\textup{in}(O)-E(O)=\sum_{j=1}^mA_j\:,
\end{align}
where
\begin{equation}
    A_j=\textup{Tr}\bigg[O\E_{m+1}\bigg(\circ_{k=j+1}^m\h_k\E_k\bigg)\delta_{\h_j}\h_j\E_j\bigg(\circ_{k=1}^{j-1}\D_{\h_k}\h_k\E_k\bigg)(\rho_{\textup{in}})\bigg]\:.
\end{equation}
For every $j$, the quantity $A_j$ is a multiple of $\widetilde{E}_{\h_j,\alpha}(O)-\widetilde{E}_\textup{in}(O)$, modulo terms that are second order in $\delta_{\h_j}$. Indeed, since $\D_{\h_j}^\alpha=(\I+\delta_{\h_j})^\alpha=\I+\alpha\delta_{\h_j}+\alpha(\alpha-1)\delta_{\h_j}^2/2+O(\delta_{\h_j}^3)$ we have
\begin{align}
\label{eq:ce2}
    \widetilde{E}_{\h_j,\alpha}(O)-\widetilde{E}_\textup{in}(O)&=(\alpha-1)\:\textup{Tr}\bigg[O\E_{m+1}\bigg(\circ_{k=j+1}^m\D_{\h_k}\h_k\E_k\bigg)\delta_{\h_j}\h_j\E_j\bigg(\circ_{k=1}^{j-1}\D_{\h_k}\h_k\E_k\bigg)(\rho_{\textup{in}})\bigg]\\
    &= (\alpha-1)\big(A_j+B_j+C_j\big)\:,
\end{align}
where
\begin{align}
    B_j=&\sum_{l=j+1}^{m}\textup{Tr}\bigg[O\E_{m+1}\bigg(\circ_{k=l+1}^m\h_k\E_k\bigg)\delta_{\h_l}\h_l\E_l\bigg(\circ_{k=j+1}^{l-1}\h_k\E_k\bigg)\delta_{\h_j}\h_j\E_j\bigg(\circ_{k=1}^{j-1}\D_{\h_k}\h_k\E_k\bigg)(\rho_{\textup{in}})\bigg]\\
    &+\frac{\alpha}{2}\textup{Tr}\bigg[O\E_{m+1}\bigg(\circ_{k=j+1}^m\h_k\E_k\bigg)\delta^2_{\h_j}\h_j\E_j\bigg(\circ_{k=1}^{j-1}\D_{\h_k}\h_k\E_k\bigg)(\rho_{\textup{in}})\bigg]\nonumber\\
    =&O\bigg(\alpha(1-\epsilon_0^{\h_j})\sum_{l=j}^{m}(1-\epsilon_0^{\h_l})\bigg)
\end{align}
is quadratic in the $\delta$s and $C_j$ is cubic. Thus, combining Eq.s \ref{eq:ce1} and \ref{eq:ce2} we find
\begin{align}
    E(O)&=\widetilde{E}_\textup{in}(O)-\sum_{j=1}^m\frac{\widetilde{E}_{\h_j,\alpha}(O)-\widetilde{E}_\textup{in}(O)}{\alpha-1}+O\bigg(\alpha\sum_{j=1}^m(1-\epsilon_0^{\h_j})\sum_{l=j}^{m}(1-\epsilon_0^{\h_l})\bigg)\\
    &=\widehat{E}_\textup{NOX}(O)+O\bigg(\alpha\sum_{j=1}^m(1-\epsilon_0^{\h_j})\sum_{l=j}^{m}(1-\epsilon_0^{\h_l})\bigg)
\end{align}
and finally
\begin{equation}
    \delta_\textup{NOX}(O)=\big|{E}(O)-\widehat{E}_\textup{NOX}(O)\big|=O\bigg(\alpha\sum_{j=1}^m(1-\epsilon_0^{\h_j})\sum_{l=j}^{m}(1-\epsilon_0^{\h_l})\bigg)\:.
\end{equation}
\end{proof}

 We end the section by proving the following lemma, which we use to derive Eq.~\ref{eq:ce1}.
\begin{lemma}
\label{lem:telescopic}
Let $\Phi_m=\circ_{j=1}^m\U_j$, $\widetilde{\Phi}_m=\circ_{j=1}^m\D_{j}\U_j$ and $\D_j=I+\delta_j$ for all $j\in\{1,\ldots,m\}$. The following equality holds:
\begin{equation}
\label{eq:lemma}
    \widetilde{\Phi}_m=\Phi_m+\sum_{j=1}^m\bigg(\circ_{k=j+1}^m\U_k\bigg)\delta_j\U_j\bigg(\circ_{k=1}^{j-1}\D_k\U_k\bigg)\:.
\end{equation}
\end{lemma}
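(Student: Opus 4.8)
The plan is to prove Eq.~\ref{eq:lemma} by induction on $m$, which is the natural structure for a telescoping-type identity of this form. The base case $m=1$ is immediate: $\widetilde{\Phi}_1=\D_1\U_1=(\I+\delta_1)\U_1=\U_1+\delta_1\U_1=\Phi_1+\delta_1\U_1$, which matches the right-hand side since the empty compositions $\circ_{k=2}^1\U_k$ and $\circ_{k=1}^0\D_k\U_k$ are both the identity map.

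For the inductive step, I would assume the identity holds for $m-1$ and write $\widetilde{\Phi}_m=\D_m\U_m\widetilde{\Phi}_{m-1}$. Expanding $\D_m=\I+\delta_m$ gives $\widetilde{\Phi}_m=\U_m\widetilde{\Phi}_{m-1}+\delta_m\U_m\widetilde{\Phi}_{m-1}$. In the first term I substitute the induction hypothesis for $\widetilde{\Phi}_{m-1}$ and distribute $\U_m$ through the sum; this yields $\U_m\Phi_{m-1}=\Phi_m$ plus the sum over $j\in\{1,\ldots,m-1\}$ of terms $(\circ_{k=j+1}^m\U_k)\delta_j\U_j(\circ_{k=1}^{j-1}\D_k\U_k)$, i.e.\ exactly the $j\le m-1$ contributions to the right-hand side of Eq.~\ref{eq:lemma}. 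In the second term I replace $\widetilde{\Phi}_{m-1}$ by its definition $\circ_{k=1}^{m-1}\D_k\U_k$, which gives precisely the $j=m$ term $\delta_m\U_m(\circ_{k=1}^{m-1}\D_k\U_k)$ since the trailing composition $\circ_{k=m+1}^m\U_k$ is empty. Adding the two contributions reproduces the claimed right-hand side for $m$.

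The only mild subtlety — and the step I expect to be the main thing to get right rather than a genuine obstacle — is bookkeeping the empty-composition conventions and making sure the two halves of the expansion cover the index ranges $\{1,\dots,m-1\}$ and $\{m\}$ with no overlap and no gap. Note that the $\U_m$ that gets distributed into the induction-hypothesis sum is exactly what promotes each $\circ_{k=j+1}^{m-1}\U_k$ to $\circ_{k=j+1}^{m}\U_k$, so the terms line up without needing to touch the $\D_k\U_k$ tails. Alternatively one could avoid induction by directly expanding the product $\prod_{j=1}^m(\I+\delta_j)\U_j$, collecting the term with no $\delta$ (giving $\Phi_m$) and the terms with exactly one $\delta_j$ (giving the displayed sum), and bounding the remaining terms — but since Eq.~\ref{eq:lemma} is an exact identity with all higher-order $\delta$ terms implicitly absorbed, the cleanest route is to observe that the stated equality is in fact exact only because the right-hand side is \emph{not} truncated: reading it carefully, Eq.~\ref{eq:lemma} as written would only be exact if higher-order terms vanish, so the induction proof above is the safe choice as it produces the identity with no approximation and one then reads off that the multilinear expansion of $\widetilde\Phi_m-\Phi_m$ has the single-$\delta$ part as displayed. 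I would therefore present the induction argument as the proof.
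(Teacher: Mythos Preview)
Your induction argument is correct and is essentially the same as the paper's proof: both proceed by induction on $m$, peel off the outermost factor $\D_m\U_m$ (the paper indexes $m\to m+1$ starting from the right-hand side, you index $m-1\to m$ starting from the left-hand side), and match the $j=m$ term with $\delta_m\U_m\widetilde{\Phi}_{m-1}$ while the induction hypothesis supplies the $j\le m-1$ terms after composing with $\U_m$ on the left.

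One clarification on your closing paragraph: your worry that Eq.~\ref{eq:lemma} might fail to be exact is misplaced. The identity is exact as written precisely because the trailing factor in each summand is $\circ_{k=1}^{j-1}\D_k\U_k$ (with the noisy $\D_k$), not $\circ_{k=1}^{j-1}\U_k$; the higher-order $\delta$ contributions are therefore hidden inside those tails rather than truncated. Equivalently, writing $\delta_j=\D_j-\I$ turns the sum into a telescoping sum that collapses exactly to $\widetilde{\Phi}_m-\Phi_m$, so there is no approximation and the right-hand side is not merely the ``single-$\delta$ part'' of a multilinear expansion.
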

\begin{proof}\textit{(Lemma \ref{lem:telescopic}).}
We prove the lemma using induction. Eq.~\ref{eq:lemma} holds trivially for $m=1$, since the l.h.s equals $\D_1\U_1$ and the r.h.s. equals $\U_1+\delta_1\U_1=\D_1\U_1$. To complete the induction we now assume that Eq.~\ref{eq:lemma} holds for a given $m>1$ and we prove the equality for $m+1$. For $m+1$ the r.h.s. of Eq.~\ref{eq:lemma} can be rewritten as
\begin{align}
    &\Phi_{m+1}+\sum_{j=1}^{m+1}\bigg(\circ_{k=j+1}^m\U_k\bigg)\delta_j\U_j\bigg(\circ_{k=1}^{j-1}\D_k\U_k\bigg)\nonumber\\
    =&
    \Phi_{m+1}+
    \delta_{m+1}\U_{m+1}\bigg(\circ_{k=1}^{m}\D_k\U_k\bigg)
    +\U_{m+1}\sum_{j=1}^{m}\bigg(\circ_{k=j+1}^m\U_k\bigg)\delta_j\U_j\bigg(\circ_{k=1}^{j-1}\D_k\U_k\bigg)\\
    =&
    \Phi_{m+1}+
    \delta_{m+1}\U_{m+1}\widetilde{\Phi}_m
    +\U_{m+1}\widetilde{\Phi}_m-\U_{m+1}\Phi_m\\\cr
    =&
    \delta_{m+1}\U_{m+1}\widetilde{\Phi}_m
    +\U_{m+1}\widetilde{\Phi}_{m}\\\cr
    =&\D_{m+1}\U_{m+1}\widetilde{\Phi}_{m}\:.
\end{align}
Since $\D_{m+1}\U_{m+1}\widetilde{\Phi}_{m}=\widetilde{\Phi}_{m+1}$, the equalities above show that if Eq.~\ref{eq:lemma} holds for a given $m$, then it also holds for $m+1$. This proves the lemma.
\end{proof}

\begin{center}
\textbf{III. Proof of Lemmas 1 and 2}
\end{center}
In this section we prove our lemmas \ref{lem:relax_PEC} and \ref{lem:relax_ce}. We begin by proving lemma \ref{lem:relax_PEC}.
\begin{proof}\textit{(Lemma \ref{lem:relax_PEC})}
We begin by analysing the performance of PEC. To do so, let us denote as 
\begin{equation}
    \U=C_\textup{tot}\E_{m+1}\bigg(\circ_{j=1}^m\R_{\h_j}\D _{\h_j}\h_j\E_j\bigg)
\end{equation}
the map implemented on average by the PEC circuits when the Pauli error rates are known exactly, with $\R_{\h_j}=\sum_{l=0}^{4^n-1}s_l\epsilon_l^{(\h_j)}\p_l$. Similarly, let us denote as
\begin{equation}
    \U'=C_\textup{tot}\E_{m+1}\bigg(\circ_{j=1}^m{\R}'_{\h_j}\D _{\h_j}\h_j\E_j\bigg)
\end{equation}
the map implemented on average by the PEC circuits when the noise reconstruction is inaccurate, with ${\R}'_{\h_j}=\sum_{l=0}^{4^n-1}s_l\widehat{\epsilon}_l^{(\h_j)}\p_l$. The estimator $\widehat{E}_\textup{PEC}(O)$ returned by PEC can be rewritten as
\begin{align}
    \widehat{E}_\textup{PEC}(O)&= \textup{Tr}\big[O\U'(\rho_\textup{in})\big]\\
    &=\textup{Tr}\big[O\U(\rho_\textup{in})\big]+\textup{Tr}\big[O(\U'-\U)(\rho_\textup{in})\big]\\
    &\leq\textup{Tr}\big[O\U(\rho_\textup{in})\big]+\|O\|_{\infty}\|(\U'-\U)(\rho_\textup{in})\|_1 \tag{H\"older's inequality} \\
    &\leq \textup{Tr}\big[O\U(\rho_\textup{in})\big]+\|O\|_{\infty}||\U-\U'||_\diamond\:,
\end{align}
where $||\cdot||_\diamond$ denotes the diamond norm. Using the same arguments as in theorem 2 in Ref.~\cite{WE16}, we find
\begin{equation}
    ||\U-\U'||_\diamond\leq C_\textup{tot}\sum_{j=1}^m||{\R}_{\h_j}-{\R}'_{\h_j}||_\diamond\leq C_\textup{tot}\sum_{j=1}^m\sum_l\big|{\epsilon}_l^{(\h_j)}-\widehat{\epsilon}_l^{(\h_j)}\big|\:,
\end{equation}
which together with theorem \ref{th:PEC} proves the lemma.
\end{proof}

We can now prove lemma \ref{lem:relax_ce}.
\begin{proof}\textit{(Lemma \ref{lem:relax_ce})}
Let 
\begin{equation}
    \U_{\h_j,\alpha}=\E_{m+1}\D_{\h_m}\h_m\E_m\cdots\D_{\h_j}^{\alpha}\h_j\cdots\E_2\h_1\E_1
\end{equation}
be the map implemented by a circuit with noise on the $j$th  noisy cycle amplified perfectly, and let 
\begin{equation}
    \U_{\h_j,\alpha}'=\E_{m+1}\D_{\h_m}\h_m\E_m\cdots\widetilde{\R}_{\h_j}\D_{\h_j}\h_j\cdots\E_2\h_1\E_1
\end{equation}
be the map implemented by a circuit with noise on the $j$th  noisy cycle amplified imperfectly. The estimator $\widehat{E}_\textup{NOX}(O)$ returned by NOX can be written as
\begin{align}
  \widehat{E}_\textup{NOX}(O) &= \widetilde{E}_\textup{in}(O)  \frac{\alpha-1+m}{\alpha-1}-\frac{1}{\alpha-1}\sum_{j=1}^m\textup{Tr}\big[O\U_{\h_j,\alpha}'(\rho_\textup{in})\big]\\
  &= \widetilde{E}_\textup{in}(O)  \frac{\alpha-1+m}{\alpha-1}-\frac{1}{\alpha-1}\sum_{j=1}^m\textup{Tr}\big[O\U_{\h_j,\alpha}(\rho_\textup{in})\big]-
  \frac{1}{\alpha-1}\sum_{j=1}^m\textup{Tr}\big[O(\U_{\h_j,\alpha}'-\U_{\h_j,\alpha})(\rho_\textup{in})\big]\\
    &\leq \widetilde{E}_\textup{in}(O)  \frac{\alpha-1+m}{\alpha-1}-\frac{1}{\alpha-1}\sum_{j=1}^m\widetilde{E}_{\h_j}(O) +
  \frac{1}{\alpha-1}\sum_{j=1}^m \|O\|_{\infty}\|(\U_{\h_j,\alpha}'-\U_{\h_j,\alpha})(\rho_\textup{in})\|_1, \tag{H\"older's inequality}\\
  &\leq \widetilde{E}_\textup{in}(O)  \frac{\alpha-1+m}{\alpha-1}-\frac{1}{\alpha-1}\sum_{j=1}^m\widetilde{E}_{\h_j}(O) +
  \frac{\|O\|_\infty}{\alpha-1}\sum_{j=1}^m||\widetilde{\R}_{\h_j}-\D_{\h_j}^{\alpha-1}||_\diamond\:,
\end{align}
which together with theorem \ref{th:ce} proves the lemma.
\end{proof}

\end{document}